\documentclass{article}
\pdfoutput=1

\usepackage{amsmath}
\usepackage{amssymb}
\usepackage{amsthm}
\usepackage[english]{babel}
\usepackage[backend=bibtex]{biblatex}
\usepackage[T1]{fontenc}
\usepackage[utf8]{inputenc}
\usepackage{lmodern}
\usepackage{csquotes}
\usepackage{thmtools}
\usepackage{hyperref}
\usepackage{complexity}
\usepackage{microtype}
\usepackage{textcomp}
\usepackage{tikz}

\LoadMicrotypeFile{cmr}
\SetProtrusion
    [load=lmr-T1]
    {encoding=T1, family=lmr}
    {
      \textquotedblright = {,1000},
      \textquotedblleft = {1000,},
      {'} = {,1000},
      {,} = {,1000},
      {:} = {,1000},
      {;} = {,1000},
      {.} = {,1000}
    }

\hypersetup{pdftitle={Polynomial-time kernel reductions}, pdfauthor={Jeffrey Finkelstein}}

\addbibresource{equivalence.bib}

\declaretheorem[numberwithin=section]{theorem}

\declaretheorem[numberlike=theorem]{lemma}
\declaretheorem[numberlike=theorem]{proposition}
\declaretheorem[numberlike=theorem]{corollary}
\declaretheorem[numberlike=theorem, style=definition, qed=\qedsymbol]{construction}
\declaretheorem[numberlike=theorem, style=definition]{definition}
\declaretheorem[numberlike=theorem, style=definition, name=Open problem]{openproblem}
\declaretheorem[numberlike=theorem, style=definition]{example}

\newenvironment{sketch}{\begin{proof}[Proof sketch]}{\end{proof}}

\newcommand{\email}[1]{\textlangle\href{mailto:#1}{\nolinkurl{#1}}\textrangle}
\newcommand{\plain}[1]{\text{ #1 }} 
\newcommand{\nkr}{\nleq^{P}_{ker}} 
\newcommand{\kr}{\leq^{P}_{ker}} 
\newcommand{\pot}{\leq^{P}_{pot}} 
\newcommand{\npot}{\nleq^{P}_{pot}} 
\newcommand{\krnt}{\leq_{ker}} 
\newcommand{\nkrnt}{\nleq_{ker}} 
\newcommand{\kri}{\leq^{P}_{ker,1}} 
\newcommand{\mor}{\leq^{P}_{m}} 
\newcommand{\mornt}{\leq_m}
\newcommand{\lb}{\left\{} 
\newcommand{\rb}{\right\}} 
\newcommand{\st}{\,\middle|\,} 
\newcommand{\defn}[1]{\emph{#1}} 
\newcommand{\pair}[2]{\langle#1,#2\rangle} 
\newcommand{\triple}[3]{\left\langle#1,#2,#3\right\rangle} 

\long\def\symbolfootnote#1{\begingroup%
\def\thefootnote{\fnsymbol{footnote}}\footnotetext{#1}\endgroup} 

\author{Jeffrey~Finkelstein \\ Boston University \\ \email{jeffreyf@bu.edu}
  \and Benjamin~Hescott \\ Tufts University \\ \email{hescott@cs.tufts.edu}}
\title{Polynomial-time kernel reductions}

\begin{document}

\maketitle

\begin{abstract}
  %
  Today, the computational complexity of equivalence problems such as the graph isomorphism problem and the Boolean formula equivalence problem remain only partially understood.
  One of the most important tools for determining the (relative) difficulty of a computational problem is the many-one reduction, which provides a way to encode an instance of one problem into an instance of another.
  In equivalence problems, the goal is to determine if a pair of strings is related, so a many-one reduction with access to the entire pair may be too powerful.
  A recently introduced type of reduction, the \emph{kernel reduction}, defined only on equivalence problems, allows the transformation of each string in the pair independently.
  Understanding the limitations of the kernel reduction as compared with the many-one reduction improves our understanding of the limitations of computers in solving problems of equivalence.
  We investigate not only these limitations, but also whether classes of equivalence problems have complete problems under kernel reductions.
  This paper provides a detailed collection of results about kernel reductions.

  %
  After exploring possible definitions of complexity classes of equivalence relations, we prove that polynomial time kernel reductions are strictly less powerful than polynomial time many-one reductions.
  We also provide sufficient conditions for complete problems under kernel reductions, show that completeness under kernel reductions can sometimes imply completeness under many-one reductions, and finally prove that equivalence problems of intermediate difficulty can exist under the right conditions.
  Though kernel reductions share some basic properties with many-one reductions, ultimately the number and size of equivalence classes can prevent the existence of a kernel reduction, regardless of the complexity of the equivalence problem.
  The most important open problem we leave unsolved is proving the unconditional existence of a complete problem under kernel reductions for some basic complexity classes that are well-known to have complete problems under many-one reductions.
\end{abstract}

\symbolfootnote{%
  Copyright 2010--2016 Jeffrey~Finkelstein and Benjamin~Hescott.

  This document is licensed under the Creative Commons Attribution-ShareAlike 4.0 International License, which is available at \mbox{\url{https://creativecommons.org/licenses/by-sa/4.0/}}.
  The \LaTeX{} markup that generated this document can be downloaded from its website at \mbox{\url{https://github.com/jfinkels/equivalence}}.
  The markup is distributed under the same license.
}

\section{Introduction}

The computational complexity of deciding whether two graphs are isomorphic has significant implications not only in computer science, but also in the computational forms of sciences such as chemistry, biology, and neuroscience.
One main technique for determining the complexity of the problem is showing how the difficulty of the problem relates to the difficulty of other known problems.
The relative difficulty of computational problems are often compared using the many-one reduction, a function by which we encode an instance of a problem as an instance of another problem.
In the case of the graph isomorphism problem, a many-one reduction from the graph isomorphism problem to, for example, the directed graph isomorphism problem allows the function computing the reduction to have access to both graphs in an instance of the problem.
However, access to both graphs is not necessary for computing the reduction; the function transforms each undirected graph independently into a directed graph.
In other words, the reduction is in reality defined on the domain of graphs, not on the domain of pairs of graphs.
This is a far more natural way to define reductions between problems of equivalence, and is furthermore a finer-grained comparison of the relative difficulty of the two computational problems.

The \emph{kernel reduction}, defined in \autocite[Definition~4.13]{fg11}, formally captures this notion of reduction among computational problems of equivalence involving independent transformation of each element of a pair.
This type of reduction has appeared previously under other names not only in this setting but also in more general settings (``Borel reduction''\kern-0.5em,\kern0.5em ``strong isomorphism reduction''\kern-0.5em,\kern0.5em ``strong equivalence reduction''\kern-0.5em,\kern0.5em ``relation reduction''\kern-0.5em,\kern0.5em ``component-wise reduction''\kern-0.5em,\kern0.5em etc.).
To the best of our knowledge, every known many-one reduction between problems of equivalence is really a kernel reduction (see, for an early example, the list of problems many-one reducible to graph isomorphism given in \autocite{bc79}).
However, kernel reductions seem less powerful than many-one reductions, since the former has access only to one element of a pair at a time.
What are the limitations of kernel reductions?

Some of our theorems adapt or clarify existing work in order to have simpler, self-contained, complexity-theoretic proofs of important theorems about kernel reductions.
In \autocite{fg11}, the authors ask whether kernel reductions and many-one reductions are provably different.
However, little beyond the definition is given there, other than the general idea that an imbalance in the number of equivalence classes of the two equivalence problems prevents the existence of a kernel reduction.
In computability theory, a similar type of reduction between equivalence problems has been well-studied by a series of recent papers (for example, \autocite{gg01, ff12, ffn12, chm12, imnn13, almnss14, mn14}).
However, these papers do not focus on efficiently computable reductions.
In \autocite{bcffm}, the authors provide a thorough treatment of not only the kernel reduction but also a generalization called the ``strong isomorphism reduction''\kern-0.5em.
Strong isomorphism reductions are themselves a special case of ``functorial reductions'' (a name borrowed from the language of category theory), which are reductions that make explicit the morphism between the category of objects being transformed.
These reductions were apparently defined in unpublished manuscripts \autocite{babai77} and \autocite{kucera76} (see \autocite[Section~15]{zkt85} for a contemporary definition, \autocite[Section~7]{babai14} for a more recent one).
Finally, the authors of \autocite{gz14} extended the work of \autocite{bcffm}, and in doing so, independently proved the main combinatorial idea used in this paper to examine the limitations of kernel reductions.
This paper, complementing that work, focuses mainly on completeness results.

We undertake a thorough investigation of the basic properties of kernel reductions, comparing them with the basic properties of many-one reductions.
The starting point for understanding many-one reductions is $\P$ and $\NP$, so we attempt to extend the definition from \autocite{fg11} of $\PEq$, the class of equivalence problems decidable in polynomial time, to the definition of the complexity class $\NPEq$ (\autoref{sec:definitions}).
We determine the limitations of kernel reductions; these appear to be combinatorial, not computational, in nature (\autoref{sec:limitations}).
We discover sufficient conditions for complete problems under kernel reductions in classes of equivalence problems (\autoref{sec:generalcompleteness}).
We compare the new notion of completeness under kernel reductions with the usual notion of completeness under many-one reductions (\autoref{sec:npeqcompleteness}).
Finally, as an analog to $\NP$-intermediary problems with respect to many-one reductions, we examine the possibility of $\NPEq$-intermediary problems with respect to kernel reductions (\autoref{sec:intermediary}).

\section{Preliminaries}
\label{sec:preliminaries}

The set of natural numbers (including $0$) is denoted $\mathbb{N}$, the set of integers is denoted $\mathbb{Z}$, and the set of positive integers is denoted $\mathbb{Z}^+$.

If $f\colon S\to T$ is a well-defined function and $S'\subseteq S$, then \defn{$f$ restricted to the domain $S'$} is the function $f'\colon S'\to T$ defined by $f'(x)=f(x)$ for all $x\in S'$.
We denote this restricted function on a smaller domain by $f|_{S'}$.
The \emph{image of $S'$}, denoted $f(S')$, is defined by $f(S') = \{f(s) \, | \, s \in S'\}$.

In this paper, $\Sigma$ denotes the binary alphabet $\{0, 1\}$.
$\Sigma^*$ is the set of all binary strings over the alphabet $\Sigma$ and $\Sigma^{\leq n}$ is the set $\lb w\in\Sigma^* \st |w|\leq n \rb$.
The empty string will be denoted by $\lambda$.
If $\sigma\in\Sigma$ then $\sigma^k$ is the string consisting of $k$ concatenated copies of the symbol $\sigma$.
If $x$ and $y$ are elements of $\Sigma^*$, then we denote by $\pair{x}{y}$ the \defn{pairwise encoding} of $x$ and $y$, which is itself an element of $\Sigma^*$.
In this paper, we will assume the reasonable pairwise encoding defined by $\pair{x}{y}=x_1x_1x_2x_2\cdots x_{|x|}x_{|x|}01y_1y_1y_2y_2\cdots y_{|y|}y_{|y|}$ for all $x$ and $y$ in $\Sigma^*$.
As usual, a \defn{language} over an alphabet $\Sigma$ is a subset of $\Sigma^*$.
The \defn{complement} of a language $L$ is $\Sigma^*\backslash L$, and is denoted $\overline{L}$.

The complexity classes $\P$, $\NP$, $\FP$ (polynomial-time computable functions), $\SKP$, $\PKP$, $\DKP$, and $\PSPACE$ have the usual definitions.
The set of words accepted by a Turing machine $M$ is denoted $L(M)$.
The \defn{complement} of a complexity class $\mathcal{C}$ is the set of complements of languages in $\mathcal{C}$, and is denoted $\coC$.

We say a Turing machine $M$ is a \defn{polynomially clocked Turing machine} if the description of $M$ includes a positive integer $k$ such that $M$ halts within time $kn^k$ on all inputs of length $n$.

If $L_1$ and $L_2$ are languages, we say that \defn{$L_1$ many-one reduces to $L_2$} if there exists a computable function $f$ such that $w \in L_1$ if and only if $f(w) \in L_2$.
We denote this by $L_1 \mornt L_2$.
If $f$ is computable in polynomial time, we denote this by $L_1 \mor L_2$.

A set $R \subseteq \Sigma^* \times \Sigma^*$ is an \defn{equivalence relation on $\Sigma^*$} if $R$ satisfies the following three properties.
\begin{itemize}
\item (reflexivity) For all $x \in \Sigma^*$, $(x,x)\in R$.
\item (symmetry) For all $x,y\in \Sigma^*$, $(x,y)\in R$ implies $(y,x)\in R$.
\item (transitivity) For all $x,y,z\in \Sigma^*$, $(x,y)\in R$ and $(y,z)\in R$ implies $(x,z)\in R$.
\end{itemize}
An equivalence relation $R$ can be encoded as a language by taking the pairwise encoding of each pair in $R$.
In this way we can study the computational complexity of classes of languages which represent equivalence relations.
In this paper we will abuse notation and write $\pair{x}{y}\in R$ for an equivalence relation $R$ on $\Sigma^*$, but what we really mean is $(x,y)\in R$ and $\pair{x}{y}\in L_R$, the language on the alphabet $\Sigma$ induced by $R$.

The \defn{equivalence class} of $x$ with respect to an equivalence relation $R$ on $\Sigma^*$ is $\lb y\in \Sigma^* \st (x,y)\in R \rb$.
It is denoted $[x]_R$, or if the context is clear, simply $[x]$.
Each element $x\in \Sigma^*$ is in exactly one equivalence class, so the equivalence classes of an equivalence relation on $\Sigma^*$ provide a partition of $\Sigma^*$.
Conversely, a partition of $\Sigma^*$ induces an equivalence relation on $\Sigma^*$ in which a pair of elements is in the relation if they are in the same block of the partition.

A \defn{complete invariant} for an equivalence relation $R$ on $\Sigma^*$ is a function $f\colon \Sigma^* \to \Sigma^*$ such that for each $x$ and $y$ in $\Sigma^*$, we have $(x, y) \in R$ if and only if $f(x) = f(y)$.
(A \emph{canonical form} for an equivalence relation is a complete invariant satisfying the additional requirement that $f(x) \in [x]_R$; canonical forms, though important, do not appear in this paper.)
In \autoref{sec:definitions} we will define generalizations of the complete invariant which accept as input an additional witness to the equivalence of $x$ and $y$.

\defn{$\PEq$} is the class of equivalence relations for which membership can be decided by a Turing machine running in deterministic polynomial time.
\defn{$\NPEq$} is the class of equivalence relations for which membership can be decided by a Turing machine running in non-deterministic polynomial time.
In other words, $\PEq$ is the set of (languages induced by) equivalence relations which are in \P, and $\NPEq$ is the set of (languages induced by) equivalence relations which are in \NP.
In general, the class \defn{$\CEq$} is the class of languages induced by equivalence relations which are in the complexity class $\mathcal{C}$.
As usual, $\PEq\subseteq\NPEq$.

We now require a natural notion of reduction among equivalence relations.
If $R$ and $S$ are equivalence relations on $\Sigma^*$, we say $R$ \defn{kernel reduces to} $S$ if there exists a computable $f\colon\Sigma^*\to\Sigma^*$ such that $\forall x,y\in\Sigma^*$, $\pair{x}{y}\in R\iff \pair{f(x)}{f(y)}\in S$.
We denote this by $R\krnt S$.
If $f$ is computable in polynomial time, then we say $R$ \defn{polynomial-time kernel reduces to} $S$ and use the notation $R\kr S$.

Notice the difference between a kernel reduction and a many-one reduction: a kernel reduction maps $\pair{x}{y}\in R$ to $\pair{f(x)}{f(y)}\in S$, whereas a many-one reduction maps $\pair{x}{y}\in R$ to $f(\pair{x}{y})\in S$, for some polynomial-time computable function $f$.
Informally, a function which computes a many-one reduction has access to both $x$ and $y$ but a function which computes a kernel reduction has access to only one of $x$ and $y$ at a time.
Since it is more restrictive, a kernel reduction induces a many-one reduction (namely the function $\pair{x}{y} \mapsto \pair{f(x)}{f(y)}$).
Still, kernel reductions compose just as many-one reductions do, and $\NPEq$ is closed under polynomial-time kernel reductions, allowing us to adapt existing complexity theoretic analysis to the study of complexity of equivalence relations.

As an analog to polynomial-time many-one completeness in \NP, we define a similar notion of completeness under polynomial-time kernel reductions in \NPEq.
An equivalence relation $S$ is \defn{\NPEq-hard} if for all $R\in\NPEq$, $R\kr S$.
If $S$ is also in \NPEq, then it is \defn{\NPEq-complete}.
If $S$ is \NPEq-complete, we sometimes say that $S$ is \defn{complete under $\kr$ reductions in \NPEq}.
Generally, an equivalence relation $S$ is \defn{$\CEq$-hard} if for all $R\in\CEq$, $R\kr S$, and \defn{$\CEq$-complete} if it is additionally in $\CEq$.

\section{Definitions of \texorpdfstring{\NPEq}{NPEq}}
\label{sec:definitions}
%
The main property of languages in $\NP$ is that membership in each language is verifiable in polynomial time, given a witness to the membership.
Many important equivalence problems are in $\NP$, and some are even $\NP$-complete, but these are complete under traditional many-one reductions, not kernel reductions.
We wish to define $\NPEq$ as the class of equivalence problems that are efficiently verifiable, just as we define $\NP$ as the class of all computational problems.
One way to define $\NPEq$ is simply as the subclass of $\NP$ that includes only equivalence problems.
This section provides some other possible definitions based on our intuition about ``efficiently verifiable'' equivalence problems and compares those definitions.

%
We show that the alternative definitions of $\NPEq$ form a hierarchy below $\NPEq$ as defined above.
In other words, $\NPEq$ is the most general class of efficiently verifiable equivalence problems.
When attempting to prove that there are complete problems in $\NPEq$ under kernel reductions, we must therefore use this most general definition.
It remains to show whether any of the (non-equal) alternative definitions are distinct, and whether any of them has a complete problem under kernel reductions.

The first definition is the analog of the fundamental definition of $\NP$; it is the formal definition of the class $\NPEq$ introduced in the previous section.

\begin{definition}\label{def:npeq1}
  An equivalence relation $R$ is in $\NPEq$ if there is a polynomial $p$ and a nondeterministic Turing machine $N$ such that for each $x$ and $y$, the machine $N$ halts in time $p(\left|\pair{x}{y}\right|)$ and
  \begin{displaymath}
    \pair{x}{y} \in R \iff N(\pair{x}{y}) \text{ accepts}.
  \end{displaymath}
\end{definition}

Just as there is a definition of $\NP$ using polynomial-time verifiers, there is an equivalent definition for $\NPEq$ using polynomial-time verifiers.
However, this definition feels a bit unnatural when dealing with equivalence relations, since the witness language would be a relation (of the form ``$(x, y)$ relates to $w$''), but not an \emph{equivalence} relation.
The next two definitions attempt to require that the witness language is itself an equivalence relation, instead of an arbitrary language in $\P$.
Each of these ``witness equivalence relations'' is a set of pairs of pairs, in which each inner pair includes a witness string.

\begin{definition}\label{def:npeq3}
  Suppose $R'$ is an equivalence relation in $\PEq$.
  An equivalence relation $R$ is a \emph{two-witness projection} of $R'$ if for each binary string $x$ and $y$,
  \begin{displaymath}
    \pair{x}{y} \in R \iff \exists w_x, w_y \colon \pair{\pair{x}{w_x}}{\pair{y}{w_y}} \in R'\kern-0.3em,
  \end{displaymath}
  where $|w_x|$ is polynomially bounded in $|x|$ and $|w_y|$ is polynomially bounded in $|y|$.
  The class $\Proj_2$ is the collection of all two-witness projections of equivalence relations in $\PEq$.
\end{definition}

\begin{definition}\label{def:npeq4}
  Suppose $R'$ is an equivalence relation in $\PEq$.
  An equivalence relation $R$ is a \emph{one-witness projection} of $R'$ if for each binary string $x$ and $y$,
  \begin{displaymath}
    \pair{x}{y} \in R \iff \exists w \colon \pair{\pair{x}{w}}{\pair{y}{w}} \in R'\kern-0.3em,
  \end{displaymath}
  where $|w|$ is polynomially bounded in $\min(|x|, |y|)$.
  The class $\Proj_1$ is the collection of all one-witness projections of equivalence relations in $\PEq$.
\end{definition}

The next two definitions attempt to allow the possibility of not just a simple string which witnesses the equivalence of $x$ and $y$, but a ``witness function'' which may map $x$ and $y$, along with witness strings, to an equivalence relation in \PEq.

\begin{definition}\label{def:npeq5}
  Suppose $R$ and $R'$ are equivalence relations.
  A function $f$ is a \emph{nondeterministic polynomial-time two-witness kernel reduction} from $R$ to $R'$ if $f$ is in $\FP$ and for each binary string $x$ and $y$,
  \begin{displaymath}
    \pair{x}{y} \in R \iff \exists w_x, w_y \colon \pair{f(x, w_x)}{f(y, w_y)} \in R'\kern-0.3em,
  \end{displaymath}
  where $|w_x|$ is polynomially bounded in $|x|$ and $|w_y|$ is polynomially bounded in $|y|$.
  The class $\Cl_2$ is the closure of $\PEq$ under these reductions.
\end{definition}

\begin{definition}\label{def:npeq6}
  Suppose $R$ and $R'$ are equivalence relations.
  A function $f$ is a \emph{nondeterministic polynomial-time one-witness kernel reduction} from $R$ to $R'$ if $f$ is in $\FP$ and for each binary string $x$ and $y$,
  \begin{displaymath}
    \pair{x}{y} \in R \iff \exists w \colon \pair{f(x, w)}{f(y, w)} \in R'\kern-0.3em,
  \end{displaymath}
  where $|w|$ is polynomially bounded in $\min(|x|, |y|)$.
  The class $\Cl_1$ is the closure of $\PEq$ under these reductions.
\end{definition}

The final two definitions attempt to describe equivalence relations for which there is a ``witnessed complete invariant''\kern-0.5em,\kern0.5em which maps equivalent strings to \emph{equal} strings when given access to some witness of their equivalence.

\begin{definition}\label{def:npeq7}
  Suppose $R$ is an equivalence relation.
  A function $f$ is a \emph{nondeterministic polynomial-time two-witness complete invariant} for $R$ if $f$ is in $\FP$ and for each binary string $x$ and $y$,
  \begin{displaymath}
    \pair{x}{y} \in R \iff \exists w_x, w_y \colon f(x, w_x) = f(y, w_y),
  \end{displaymath}
  where $|w_x|$ is polynomially bounded in $|x|$ and $|w_y|$ is polynomially bounded in $|y|$.
  The class $\NKer_2$ is the collection of all equivalence relations that admit such a function.
\end{definition}

\begin{definition}\label{def:npeq8}
  Suppose $R$ is an equivalence relation.
  A function $f$ is a \emph{nondeterministic polynomial-time one-witness complete invariant} for $R$ if $f$ is in $\FP$ and for each binary string $x$ and $y$,
  \begin{displaymath}
    \pair{x}{y} \in R \iff \exists w \colon f(x, w) = f(y, w),
  \end{displaymath}
  where $|w|$ is polynomially bounded in $\min(|x|, |y|)$.
  The class $\NKer_1$ is the collection of all equivalence relations that admit such a function.
\end{definition}

The definitions of these complexity classes yield a chain of inclusions beginning with $\NKer_1$ and terminating with $\NPEq$.
\begin{theorem}\label{thm:definitions}
  $\NKer_1 = \Cl_1 = \Proj_1 \subseteq \NKer_2 \subseteq \Cl_2 = \Proj_2 \subseteq \NPEq$.
\end{theorem}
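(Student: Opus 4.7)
The plan is to traverse the chain in three kinds of moves: a handful of easy containments (namely $\NKer_i \subseteq \Cl_i$, $\Proj_2 \subseteq \NPEq$, and the trivial one-witness-implies-two-witness inclusions for $\NKer$, $\Cl$, and $\Proj$); the two equalities $\Cl_i = \Proj_i$ for $i \in \{1, 2\}$; and the one nontrivial collapse $\Proj_1 \subseteq \NKer_1$, from which $\Proj_1 \subseteq \NKer_2$ follows automatically.

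For the easy containments I would use the following observations. Equality of strings is itself an equivalence relation in $\PEq$, so any nondeterministic witnessed complete invariant for $R$ is literally a nondeterministic witnessed kernel reduction from $R$ to equality; this handles $\NKer_i \subseteq \Cl_i$. For $\Proj_2 \subseteq \NPEq$ I would guess $w_x$ and $w_y$ of polynomial length nondeterministically and run the $\PEq$ decider for $R'$ on $\pair{\pair{x}{w_x}}{\pair{y}{w_y}}$. And $\NKer_1 \subseteq \NKer_2$, together with its analogs for $\Cl$ and $\Proj$, is immediate by taking $w_x = w_y = w$ in a one-witness statement.

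For $\Cl_i = \Proj_i$, in the direction $\Proj_i \subseteq \Cl_i$ I would use the reduction $f(x, w) = \pair{x}{w}$, which is in $\FP$ and turns any projection statement directly into a witnessed kernel-reduction statement. In the other direction, given $f \in \FP$ and $R' \in \PEq$ realizing $R \in \Cl_i$, I would define a new relation $R''$ by declaring $\pair{\pair{x}{w_x}}{\pair{y}{w_y}} \in R''$ iff $\pair{f(x, w_x)}{f(y, w_y)} \in R'$, extending by equality on inputs that do not decode as pairs so that $R''$ is a total equivalence relation on $\Sigma^*$. Reflexivity, symmetry, and transitivity of $R''$ transfer from the corresponding properties of $R'$, polynomial-time decidability follows because $f \in \FP$ and $R' \in \PEq$, and then $R$ is the $i$-witness projection of $R''$.

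The main step is $\Proj_1 \subseteq \NKer_1$. The key auxiliary claim I would isolate is that every $R' \in \PEq$ already admits a one-witness complete invariant of its own: set $g(a, w) = w$ if $(a, w) \in R'$ and $g(a, w) = a$ otherwise. A short case analysis on which of $(a, w)$ and $(b, w)$ lies in $R'$ shows that $g(a, w) = g(b, w)$ for some $w$ if and only if $(a, b) \in R'$: the ``both in'' case follows from transitivity, the ``neither in'' case forces $a = b$ and then uses reflexivity, and the mixed case is impossible because the equality $g(a, w) = g(b, w)$ would force the excluded pair back into $R'$. Given $R \in \Proj_1$ via $R'$, I would then compose to obtain $f(x, (w, v)) = g(\pair{x}{w}, v)$, which is in $\FP$. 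The main obstacle is the witness-size accounting required by the definition of $\NKer_1$: the inner witness $v$ must be polynomial in $\min(|\pair{x}{w}|, |\pair{y}{w}|)$, so I would choose $v$ to be whichever of those two strings is shorter, which together with the polynomial bound on $|w|$ given by the definition of $\Proj_1$ keeps $|(w, v)|$ polynomial in $\min(|x|, |y|)$ as required.
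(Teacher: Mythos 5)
Your proof is correct, and most of the individual containments coincide with the paper's sketch: $\NKer_i \subseteq \Cl_i$ via the equality relation, $\Proj_i \subseteq \Cl_i$ via the trivial pairing reduction, $\Cl_i \subseteq \Proj_i$ by hardcoding $f$ into $R'$, $\Proj_2 \subseteq \NPEq$ by guessing the witnesses, and $\NKer_1 \subseteq \NKer_2$ by reusing the single witness. The genuine difference is in how the three one-witness classes are collapsed. The paper closes the cycle as $\Proj_1 \subseteq \Cl_1 \subseteq \NKer_1 \subseteq \Proj_1$, and its one nontrivial step, $\Cl_1 \subseteq \NKer_1$, uses an invariant whose witness $w' = (y, v)$ encodes the \emph{other endpoint} of the pair together with the shared witness. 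You close the cycle the other way, $\NKer_1 \subseteq \Cl_1 \subseteq \Proj_1 \subseteq \NKer_1$, and your nontrivial step $\Proj_1 \subseteq \NKer_1$ rests on a clean, reusable lemma: every $R' \in \PEq$ already admits a one-witness complete invariant $g(a, w) = w$ if $(a, w) \in R'$ and $g(a, w) = a$ otherwise (i.e., $\PEq \subseteq \NKer_1$ outright), which you then compose with the projection; your three-case verification of this lemma is sound. Your route buys strictly better witness-length accounting: your witness is a shortest member of the common $R'$-class, so its length really is polynomial in $\min(|x|,|y|)$ as the definition of $\NKer_1$ demands, whereas the paper's witness $(y, v)$ has length at least $|y|$, which is \emph{not} polynomially bounded in $\min(|x|, |y|)$ when $|y| \gg |x|$ --- on this point your argument is tighter than the published sketch. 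The only caveats, shared equally with the paper's proof, are that the $\NKer_1 \subseteq \NKer_2$ step should tag the output with the witness (e.g., output the pair $(f(x, w), w)$) so that a collision $f(x, w_x) = f(y, w_y)$ with $w_x \neq w_y$ cannot arise spuriously, and that the backward directions throughout implicitly read the witness-length clauses in the definitions as promises (``if any witness exists, a short one does'') rather than as part of the defining quantifier; neither point is a defect specific to your argument.
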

\begin{sketch}
  $\Proj_1 \subseteq \Cl_1$ by choosing the kernel reduction $f$ to be the identity function.
  $\Cl_1 \subseteq \NKer_1$ by choosing the complete invariant $f'$ to be
  \begin{equation*}
    f'(x, w') =
    \begin{cases}
      w'0 & \text{if } \pair{f(x, v)}{f(y, v)} \in R', \text{ where } w' = (y, v) \\
      x1 & \text{otherwise},
    \end{cases}
  \end{equation*}
  where $f$ is the kernel reduction.
  $\NKer_1 \subseteq \Proj_1$ by choosing $R'$ to be the equality relation after an application of the complete invariant $f$ to both the left pair and the right pair in the relation.

  $\NKer_1 \subseteq \NKer_2$ by choosing both $w_x$ and $w_y$ to be the witness $w$.
  $\NKer_2 \subseteq \Cl_2$ by choosing $R'$ to be the equality relation.

  $\Proj_2 \subseteq \Cl_2$ by choosing $f$ to be the identity function.
  $\Cl_2 \subseteq \Proj_2$ by hardcoding the function $f$ into the relation $R'$.

  $\Proj_2 \subseteq \NPEq$ by defining $N$ to nondeterministically choose $w_x$ and $w_y$ then verify that $\pair{x}{w_x}$ and $\pair{y}{w_y}$ are related under $R'$.
\end{sketch}

We are unable to show $\Cl_2 \subseteq \NKer_2$ using the technique that shows $\Cl_1 \subseteq \NKer_1$ because the complete invariant $f'$ cannot access both of the necessary witnesses for the kernel reduction $f$ in a symmetric way.
The best we can do is show this inclusion under an assumption.

Our one-witness and two-witness complete invariants are generalizations of the deterministic complete invariant, as defined in \autoref{sec:preliminaries}.
In \autocite{fg11}, the authors define the class $\Ker$ as the set of all equivalence relations $R$ that have a polynomial-time computable complete invariant.
They provide evidence that $\Ker$ and $\PEq$ are different by showing that equality of the two classes implies some unlikely collapses in ``higher'' complexity classes.
Unfortunately, we are only able to show that $\Cl_2 \subseteq \NKer_2$ under the assumption that $\Ker = \PEq$.

\begin{corollary}
  If $\Ker = \PEq$, then $\NKer_2 = \Cl_2 = \Proj_2$.
\end{corollary}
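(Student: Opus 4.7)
From \autoref{thm:definitions} we already have the chain $\NKer_2 \subseteq \Cl_2 = \Proj_2$, so the only inclusion left to establish is $\Cl_2 \subseteq \NKer_2$ under the additional assumption $\Ker = \PEq$. My plan is to prove this by composing a nondeterministic polynomial-time two-witness kernel reduction with a deterministic complete invariant supplied by the hypothesis, the missing ingredient that prevented the unconditional argument in \autoref{thm:definitions}.

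First I would take an arbitrary $R \in \Cl_2$ and unpack the definition: there exist some $R' \in \PEq$ and some $f \in \FP$ such that $\pair{x}{y} \in R$ iff there exist polynomially bounded witnesses $w_x, w_y$ with $\pair{f(x, w_x)}{f(y, w_y)} \in R'$. Because $R' \in \PEq$, the assumption $\Ker = \PEq$ supplies a polynomial-time computable complete invariant $g \colon \Sigma^* \to \Sigma^*$ for $R'$, so that $\pair{a}{b} \in R'$ iff $g(a) = g(b)$ for all $a, b \in \Sigma^*$.

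Next I would define the candidate two-witness complete invariant for $R$ by composition, namely $f'(x, w) = g(f(x, w))$. Since $f \in \FP$ and $g$ is polynomial-time computable, $f'$ is itself in $\FP$, and the size bounds on $w$ carry over unchanged. Substituting this definition into the equivalence supplied by $\Cl_2$ membership and then rewriting $R'$-membership via $g$ gives, for every $x$ and $y$,
\begin{displaymath}
  \pair{x}{y} \in R \iff \exists w_x, w_y \colon g(f(x, w_x)) = g(f(y, w_y)) \iff \exists w_x, w_y \colon f'(x, w_x) = f'(y, w_y),
\end{displaymath}
which is exactly \autoref{def:npeq7}, so $R \in \NKer_2$.

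I do not anticipate a real obstacle here: the reason the unconditional argument fails is that a two-witness kernel reduction lands in an arbitrary $R' \in \PEq$ with no obvious way to collapse $R'$-classes into single strings from one side at a time, and the assumption $\Ker = \PEq$ is precisely what removes that obstacle by handing us the collapsing function $g$. The only points worth double-checking while writing the proof carefully are that the polynomial bounds on witness length are preserved under composition (they are, since $g$ runs in polynomial time) and that we never need to coordinate the choices of $w_x$ and $w_y$, which was the symmetry issue blocking the one-sided trick in the proof of \autoref{thm:definitions}.
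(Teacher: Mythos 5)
Your proof is correct and uses essentially the same idea as the paper: apply the polynomial-time complete invariant guaranteed by $\Ker = \PEq$ to the $R'$-instances, turning membership in $R'$ into string equality and thereby producing a two-witness complete invariant. The only cosmetic difference is that you close the chain by showing $\Cl_2 \subseteq \NKer_2$ (composing $g$ with the reduction $f$) while the paper shows $\Proj_2 \subseteq \NKer_2$ (applying the invariant directly to the pairs $\pair{x}{w_x}$); since \autoref{thm:definitions} gives $\Cl_2 = \Proj_2$, these are interchangeable.
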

\begin{proof}
  By the previous theorem, it suffices to show $\Proj_2 \subseteq \NKer_2$.
  Suppose $R \in \Proj_2$, so there is an $R' \in \PEq$ such that $\pair{x}{y} \in R$ if and only if there are $w_x$ and $w_y$ such that $\pair{\pair{x}{w_x}}{\pair{y}{w_y}} \in R'$.
  Since $\Ker = \PEq$, there is a function $f \in \FP$ such that $\pair{\pair{x}{w_x}}{\pair{y}{w_y}} \in R'$ if and only if $f(x, w_x) = f(y, w_y)$.
  Thus there is a function $f$ such that $\pair{x}{y} \in R$ if and only if there are $w_x$ and $w_y$ such that $f(x, w_x) = f(y, w_y)$.
  Therefore $R \in \NKer_2$.
\end{proof}


\section{Limitations of kernel reductions}\label{sec:limitations}
%
Can a kernel reduction be used anywhere a many-one reduction can be used?
If so, a function with access to one element of a pair would be exactly as powerful as a function with access to both elements of a pair; our intuition is that this is unlikely.
This section proves that polynomial-time kernel reductions are strictly weaker than polynomial-time many-one reductions.

%
We find that a bound on the size of the image of a kernel reduction implies that the function can only access a finite number of equivalence classes.
Constructing equivalence relations so that there is an imbalance in the number of equivalence classes with respect to any fixed function suffices to show that no polynomial-time kernel reduction can exist between the two.
Thus, we conclude that polynomial-time kernel reductions are more restrictive than polynomial-time many-one reductions.
This will be important for \autoref{sec:generalcompleteness} as well, since it means that completeness under kernel reductions is distinct from completeness under many-one reductions.

We adopt and extend the notation $\#R$ from \autocite{bcffm} to denote the number of equivalence classes in an equivalence relation $R$.

\begin{definition}[{\autocite[Section~5]{bcffm}}]
  Suppose $R$ is an equivalence relation on $\Sigma^*$.
  Let $\#R(n) = \left|\left\{[x]_R \, \middle| \, x \in \Sigma^{\leq n}\right\}\right|$, or in other words, $\#R(n)$ is the number of equivalence classes in $R$ for strings of length at most $n$.
  Let $\#R = \max\limits_{n \in \mathbb{N}} \#R(n)$ if the maximum exists, or in other words, $\#R$ is the number of equivalence classes in $R$.
\end{definition}

As first stated in \autocite{fg11}, if the number of equivalence classes in $R$ is greater than the number of equivalence classes in $S$, then no kernel reduction can exist (regardless of any time or space bounds on the function computing the reduction).
For completeness, we prove this basic fact in \autoref{prop:noreduction} below.
However, a many-one reduction can overcome this restriction by having access to both strings in the pair.
Before proving that, we require the following lemma showing that kernel reductions must preserve ``related-ness'' of pairs of elements by mapping equivalence classes in $R$ to equivalence classes in $S$.
(The proof, omitted here, is a straightforward application of the definitions.)

\begin{lemma}\label{lem:image}
  Suppose $R$ and $S$ are equivalence relations on $\Sigma^*$.
  Suppose $R \krnt S$ and $f$ is the function computing the kernel reduction.
  Let $\hat{f}$ denote the function defined by $\hat{f}([x]_R) = [f(x)]_S$, for all equivalence classes $[x]_R$ in $R$.
  Then
  \begin{itemize}
  \item $\hat{f}$ is injective,
  \item $f([w]_R) \subseteq \hat{f}([w]_R)$ for any $w \in \Sigma^*$.
  \end{itemize}
\end{lemma}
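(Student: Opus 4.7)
The plan is to unpack the biconditional defining a kernel reduction ($\pair{x}{y} \in R \iff \pair{f(x)}{f(y)} \in S$) and apply each direction where needed. Before addressing the two bullet points, I would first verify that $\hat{f}$ is well-defined as a function on equivalence classes: if $x' \in [x]_R$, then $(x, x') \in R$, so the forward direction of the kernel reduction gives $(f(x), f(x')) \in S$, which means $[f(x)]_S = [f(x')]_S$. Hence $\hat{f}$ does not depend on the chosen representative.

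For injectivity, I would assume $\hat{f}([x]_R) = \hat{f}([y]_R)$, i.e., $[f(x)]_S = [f(y)]_S$, so $(f(x), f(y)) \in S$. The reverse (``if'') direction of the kernel reduction then forces $(x, y) \in R$, so $[x]_R = [y]_R$. This is a one-line argument that uses only the nontrivial half of the biconditional.

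For the image containment, I would pick an arbitrary element of $f([w]_R)$. By definition this element is $f(x)$ for some $x$ with $(x, w) \in R$. Applying the forward direction of the reduction once more yields $(f(x), f(w)) \in S$, so $f(x) \in [f(w)]_S = \hat{f}([w]_R)$, and the containment follows.

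There is no real obstacle: every step is a direct application of the definition of a kernel reduction to representatives, which is presumably why the authors omit the proof. The only minor care needed is to distinguish $f([w]_R)$, the set-theoretic image of the class $[w]_R$ under $f$, from $\hat{f}([w]_R) = [f(w)]_S$, which is the equivalence class in $S$ containing $f(w)$; the containment may well be strict in general, since $f$ need not hit every string in the target equivalence class.
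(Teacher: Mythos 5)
Your proof is correct and is precisely the ``straightforward application of the definitions'' that the paper alludes to when it omits the proof: well-definedness and the containment use the forward direction of the biconditional, and injectivity uses the reverse direction. The extra care you take in checking that $\hat{f}$ is well-defined on representatives, and in distinguishing the set image $f([w]_R)$ from the class $\hat{f}([w]_R)$, is exactly right.
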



\begin{proposition}\label{prop:noreduction}
  Let $R$ and $S$ be equivalence relations on $\Sigma^*$.
  If $\#R > \#S$, then $R \nkrnt S$.

  Furthermore, suppose $\#R = n$ and $\#S = m$, and suppose $m \geq 2$.
  Let $r_1, \dotsc, r_n$ and $s_1, \dotsc, s_m$ denote representatives of the equivalence classes in $R$ and $S$, respectively.
  If the problem of deciding whether $x \in [r_i]_R$ for any $x \in \Sigma^*$ is recognizable, then $R \mornt S$.
\end{proposition}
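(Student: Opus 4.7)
The proposition splits cleanly into two claims, and my plan is to handle each with the representative machinery already built up.

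For the first claim ($\#R > \#S$ implies $R \nkrnt S$), the plan is to argue by contraposition. If $R \krnt S$ via some $f$, then \autoref{lem:image} supplies an induced map $\hat{f}$ from the equivalence classes of $R$ to the equivalence classes of $S$ and asserts that $\hat{f}$ is injective. An injection from a set of size $\#R$ into a set of size $\#S$ forces $\#R \leq \#S$, which contradicts the hypothesis. No further combinatorial work is needed; the lemma does it all.

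For the second claim, the plan is to construct an explicit computable many-one reduction $g\colon \Sigma^* \to \Sigma^*$ that, unlike a kernel reduction, operates on the whole pair $\pair{x}{y}$. Given input $\pair{x}{y}$, the reduction first determines the unique indices $i, j \in \{1, \dotsc, n\}$ with $x \in [r_i]_R$ and $y \in [r_j]_R$. Then $g$ outputs
\begin{displaymath}
  g(\pair{x}{y}) = \begin{cases} \pair{s_1}{s_1} & \text{if } i = j, \\ \pair{s_1}{s_2} & \text{if } i \neq j. \end{cases}
\end{displaymath}
Correctness follows because $\pair{x}{y} \in R$ exactly when $x$ and $y$ lie in the same class $[r_i]_R$, while reflexivity of $S$ gives $\pair{s_1}{s_1} \in S$ and the hypothesis $m \geq 2$ guarantees that $s_1$ and $s_2$ represent different classes of $S$, so $\pair{s_1}{s_2} \notin S$. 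Inputs that are not well-formed pair encodings can be mapped to any fixed non-pair string, which lies in neither $L_R$ nor $L_S$ by convention.

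The main obstacle is giving rigorous computational meaning to ``determine the index $i$ with $x \in [r_i]_R$''\kern-0.3em. Here the recognizability hypothesis does the work: each of the finitely many languages $[r_1]_R, \dotsc, [r_n]_R$ is recognizable, and together they partition $\Sigma^*$. My plan is to dovetail the $n$ recognizers on input $x$ in parallel; since $x$ lies in exactly one class, exactly one recognizer eventually halts and accepts, and its index is the $i$ we want. This converts recognizability of each class into a total computable class-identifier function, which is all $g$ needs in order to be computable --- and computability is all that $\mornt$ asks for, so no polynomial time bound has to be tracked.
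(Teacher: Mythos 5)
Your proposal is correct and follows essentially the same route as the paper: the first claim via the injectivity of the induced map on equivalence classes from \autoref{lem:image} plus pigeonhole, and the second via a computable reduction that identifies the classes of $x$ and $y$ by running the recognizers in parallel and outputs a fixed related or unrelated pair of representatives accordingly. The only differences are cosmetic (outputting $\pair{s_1}{s_1}$ rather than $\pair{s_i}{s_i}$, and your explicit remark about malformed encodings).
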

\begin{proof}
  Assume that $R \krnt S$.
  By \autoref{lem:image}, the function mapping equivalence classes in $R$ to equivalence classes in $S$ induced by the kernel reduction is injective.
  However, this violates the pigeonhole principle.
  Therefore no kernel reduction exists from $R$ to $S$.

  On the other hand, there is a many-one reduction from $R$ to $S$.
  First, suppose $S$ has $m$ equivalence classes and let $s_1, \dotsc, s_m$ be representatives of each equivalence class in $S$.
  On input $\pair{x}{y}$, for each $i \in \{1, \dotsc, n\}$ in parallel, determine if $x \in [r_i]_R$ and $y \in [r_i]_R$ (also in parallel).
  If $x$ and $y$ are both in $[r_i]_R$ for some $i$, output $\pair{s_i}{s_i}$, otherwise output $\pair{s_1}{s_2}$.

  Since each string must be in exactly one of the equivalence classes of $R$, this function must halt when searching for the equivalence class for the strings $x$ and $y$.
  If $\pair{x}{y} \in R$, then they are in the same equivalence class of $R$ and hence the function will output $\pair{s_i}{s_i}$, which is in $S$ by the reflexivity of $S$.
  If $\pair{x}{y} \notin R$, then they are in different equivalence classes and hence the function will output $\pair{s_1}{s_2}$, which is not in $S$ because $[s_1]_S \neq [s_2]_S$ by hypothesis.
  Therefore this function is a computable many-one reduction from $R$ to $S$.
\end{proof}

\begin{example}
  Let $R = \mathbb{Z} / 3 \mathbb{Z}$ and $S = \mathbb{Z} / 2 \mathbb{Z}$.
  Then $R \mornt S$ but $R \nkrnt S$.
\end{example}

As seen in \autoref{prop:noreduction}, for equivalence relations $R$ and $S$ with a finite number of equivalence classes, a kernel reduction from $R$ to $S$ can only exist if the number of equivalence classes in $R$ is at most the number of equivalence classes in $S$.
However, most ``interesting'' equivalence relations have an infinite number of equivalence classes.
In \autocite[Section~4]{fg11}, the authors ask if there are such equivalence relations ``of the same densities [that is, density of equivalence classes] on which kernel reduction and [many-one] reduction differ''\kern-0.5em.
\autocite[Theorem~5.1]{bcffm} (see also \autocite[Remark~5.2]{bcffm}) answers this question affirmatively, providing an infinite antichain of equivalence relations that are equivalent under polynomial-time many-one reductions but otherwise incomparable under polynomial-time ``strong isomorphism reductions'' (proven in \autocite[Section~7]{bcffm} to be equivalent to polynomial-time kernel reductions).
We will provide a simple proof of a special case of \autocite[Theorem~5.1]{bcffm}, showing that an imbalance in the density of equivalence classes prevents a kernel reduction.
This proof is valuable because it requires only knowledge of basic computational complexity theory and not knowledge of Boolean algebras, descriptive set theory, or other mathematical logic.

First we show that an equivalence relation dense in equivalence classes cannot be reduced to one sparse in equivalence classes.
We emphasize that our result does not concern the sparseness of strings in a language, but the sparseness of equivalence classes in an equivalence relation.
This complements the work on ``potential reducibility'' defined in \autocite[Section~5]{bcffm}.

\begin{definition}[{\autocite[Definition~7.2]{bcffm}}]
  Let $R$ and $S$ be equivalence relations on $\Sigma^*$.
  $R$ is \defn{potentially reducible} to $S$, denoted $R\pot S$, if there exists a polynomial $p$ such that for all $n\in\mathbb{N}$, $\#R(n)\leq \#S(p(n))$.
\end{definition}

It follows from the definitions that for any equivalence relations $R$ and $S$, $R\kr S\implies R\pot S$, and hence $R\npot S \implies R\nkr S$ (this is stated and proven explicitly in \autocite[Lemma~5.5]{bcffm}).
As an analog to traditional sparse languages, we provide a definition of ``kernel sparsity''\kern-0.5em, and show its application to determining potential reducibility and hence kernel reducibility.

\begin{definition}
  An equivalence relation $R$ on $\Sigma^*$ is \defn{kernel sparse} if there exists a polynomial $p$ such that for all $n\in\mathbb{N}$, $\#R(n)\leq p(n)$.
  In other words, the number of equivalence classes in $R$ for strings of length at most $n$ is bounded above by a polynomial in $n$.

  An equivalence relation is \defn{kernel dense} if it is not kernel sparse.
  Formally, if for all polynomials $p$ there exists an $n\in\mathbb{N}$ such that $\#R(n)>p(n)$.
  In other words, the number of equivalence classes in $R$ for strings of length at most $n$ is greater than any polynomial in $n$.
\end{definition}

These definitions allow us to provide the following very natural proposition.
Intuitively, it states that an equivalence relation with many closely packed equivalence classes cannot reduce (under polynomially bounded notions of reduction) to an equivalence relation with few but widely spaced equivalence classes.
This idea is stated without proof in \autocite[Section~4]{fg11}, so we provide it here for completeness.
It is also essentially a special case of \autocite[Lemma~2.3]{gz14}, developed independently of that paper.

\begin{theorem}\label{thm:density}
  Let $R$ and $S$ be equivalence relations on $\Sigma^*$.
  If $R$ is kernel dense and $S$ is kernel sparse, then $R\nkr S$.
\end{theorem}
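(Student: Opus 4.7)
The plan is to argue by contrapositive: assume a polynomial-time kernel reduction $f$ from $R$ to $S$ exists and derive that $R$ must in fact be kernel sparse, contradicting the hypothesis. The core ingredients are the injectivity of the induced map on equivalence classes (\autoref{lem:image}) together with the fact that a polynomial-time function has polynomially bounded output length.

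First, suppose $R \kr S$ via some $f \in \FP$. Since $f$ is polynomial-time computable, there is a polynomial $q$ such that $|f(x)| \leq q(|x|)$ for every $x \in \Sigma^*$. Next, invoke \autoref{lem:image} to obtain the induced map $\hat{f}$ on equivalence classes, which is injective. The key observation is that $\hat{f}$ sends any equivalence class $[x]_R$ with $|x| \leq n$ to an equivalence class $[f(x)]_S$ whose representative $f(x)$ has length at most $q(n)$, so this image class is counted in $\#S(q(n))$. Combined with injectivity of $\hat{f}$ restricted to classes hit by strings of length $\leq n$, this yields the inequality
\begin{displaymath}
  \#R(n) \leq \#S(q(n)) \quad \text{for all } n \in \mathbb{N}.
\end{displaymath}
This is exactly the statement $R \pot S$.

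Finally, apply the sparsity hypothesis on $S$: there is a polynomial $p$ with $\#S(m) \leq p(m)$ for all $m \in \mathbb{N}$. Composing with the previous bound gives $\#R(n) \leq p(q(n))$, and $p \circ q$ is itself a polynomial. But this means $R$ is kernel sparse, contradicting the assumption that $R$ is kernel dense. Hence no polynomial-time kernel reduction from $R$ to $S$ can exist.

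I do not anticipate a serious obstacle here: the argument is essentially the chain of implications $R \kr S \implies R \pot S$ (via \autoref{lem:image} and the polynomial output bound of $f$), followed by the observation that the potential reducibility inequality combined with polynomial sparsity of $S$ forces polynomial sparsity of $R$. The only point deserving care is verifying that the polynomial output bound of $f$ legitimately shifts the argument of $\#S$ from $n$ to $q(n)$, and that the injectivity of $\hat{f}$ holds even when restricted to the finite subcollection of equivalence classes meeting $\Sigma^{\leq n}$ — both of which are immediate from the definitions.
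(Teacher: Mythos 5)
Your proposal is correct and follows essentially the same route as the paper: both arguments pass through potential reducibility ($R \kr S \implies R \pot S$, which the paper cites as following from the definitions and which you prove inline via \autoref{lem:image} and the polynomial output bound) and then derive a contradiction by composing the polynomial from the sparsity of $S$ with the polynomial from the reduction, against the kernel density of $R$. The only cosmetic difference is that the paper exhibits an explicit length $n_0$ witnessing the failure of the inequality, whereas you phrase the same contradiction as ``$R$ would be kernel sparse''; these are equivalent by the definition of kernel density.
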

\begin{proof}
  That $R\npot S$ implies $R\nkr S$ was already stated in the text preceding this theorem, so it suffices to show that $R\npot S$.

  Assume that $R\pot S$ with the intention of producing a contradiction.
  Let $p$ be a polynomial such that $\#R(n) \leq \#S(p(n))$ (this is the definition of potential reducibility).
  Let $q$ be a polynomial such that $\#S(n) < q(n)$ for each natural number $n$ (this is the definition of kernel sparse).
  Substituting $p(n)$ for $n$ in this inequality yields the inequality $\#S(p(n)) < q(p(n))$, which is a polynomial in $n$.
  Let $r = q \circ p$.

  Let $n_0$ be a natural number such that $\#R(n_0) > r(n_0)$, by the definition of kernel sparsity.
  Since $\#S(p(n_0)) \leq r(n_0)$, we have $\#R(n_0) > \#S(p(n_0))$.
  In other words, there are more equivalence classes in $R$ for strings up to length $n_0$ than there are in $S$ for strings up to length $p(n_0)$.
  By the pigeonhole principle, we conclude that $R$ cannot potentially reduce to $S$, because the number of equivalence classes in $R$ for strings up to length $n_0$ is too great compared to the number of equivalence classes in $S$ for strings up to length $p(n_0)$.
  This is a contradiction with the assumption that $R\pot S$.
  We have shown this for arbitrary polynomials (which came from the definitions of potential reducibility and kernel sparsity), so we can conclude that the result holds for all equivalence relations $R$ and $S$ that are kernel dense and kernel sparse, respectively.
\end{proof}

\begin{example}
  Consider the equality relation and the ``equal lengths'' relation (that is, $x$ relates to $y$ if $|x| = |y|$).
  The equality relation is kernel dense, since there are $2^n$ equivalence classes for strings of length at most $n$ (one for each string).
  The ``equal lengths'' relation is kernel sparse, since there are $n + 1$ equivalence classes for strings of length at most $n$ (one for each length, including length $0$).
  Therefore there is no polynomial-time kernel reduction from the equality relation to the ``equal lengths'' relation.
\end{example}

This places a strong restriction on equivalence relations that are hard (or complete) under polynomial-time kernel reductions: they cannot be kernel sparse.
This means that the equality relation, the densest possible equivalence relation with an exponential number of equivalence classes at each length, is a troublemaker in every complexity class that contains it.

\begin{corollary}
  Let $\CEq$ be a complexity class of equivalence relations containing the equality relation $R_{eq}$.
  If an equivalence relation $R$ is kernel sparse, then it is not $\CEq$-hard.
\end{corollary}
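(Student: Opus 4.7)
The plan is to derive this corollary as a direct consequence of \autoref{thm:density} by contradiction. Suppose, toward a contradiction, that $R$ is kernel sparse and $\CEq$-hard. Since $R_{eq} \in \CEq$ by hypothesis, the definition of $\CEq$-hardness gives $R_{eq} \kr R$.

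The key observation is then that $R_{eq}$ is kernel dense: for strings of length at most $n$, every distinct string forms its own equivalence class under equality, so $\#R_{eq}(n) = |\Sigma^{\leq n}| = 2^{n+1} - 1$, which exceeds every polynomial in $n$. (This is the same computation noted in the example immediately preceding the corollary.) Since $R_{eq}$ is kernel dense and $R$ is kernel sparse, \autoref{thm:density} yields $R_{eq} \nkr R$, contradicting the reduction obtained from $\CEq$-hardness.

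The proof is essentially a one-line application of \autoref{thm:density} together with the density calculation for equality; there is no real obstacle, since all the combinatorial work has already been done. The only minor subtlety worth mentioning explicitly is verifying that equality is indeed kernel dense (which requires counting singleton equivalence classes), so that the density hypothesis of \autoref{thm:density} is satisfied. Everything else is bookkeeping with the definitions of hardness and kernel reducibility.
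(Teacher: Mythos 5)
Your proof is correct and is exactly the argument the paper intends: the corollary is stated without a formal proof, but the surrounding text makes clear it follows from the kernel density of the equality relation together with \autoref{thm:density}, which is precisely your contradiction. (Minor note: the paper's example counts $2^n$ classes while you count $2^{n+1}-1 = |\Sigma^{\leq n}|$; either way the count is superpolynomial, so nothing changes.)
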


Polynomial-time many-one reductions are more powerful than polynomial-time kernel reductions because the former are not subject to restrictions on numbers of equivalence classes as in \autoref{prop:noreduction} and \autoref{thm:density}.
The idea behind \autoref{thm:density} leads to a construction of equivalence relations $R$ and $S$ between which there is a polynomial-time many-one reduction but no polynomial-time kernel reduction.

\begin{construction}\label{con:rands}
  Let $f_1, f_2, \dotsc$ be an enumeration of all polynomial-time computable functions.
  Assume, without loss of generality, that for all positive integers $i$, function $f_i$ runs in time $p_i(n)$, where $p_i(n) = i n^i$ for all positive integers $n$.

  Suppose $n$ is a positive integer.
  Define $R_n$ as the set of all strings of length $n$, except $R_1$, which also includes the string of length $0$.
  Define $S_n$ as the set of all strings $s$ satisfying the inequality $p_n(n) + 1 \leq |s| \leq p_{n + 1}(n + 1)$, except $S_1$, which includes all strings of length at most $p_2(2)$.

  Define sets $R$ and $S$ as
  \begin{equation*}
    R = \bigcup_{n \in \mathbb{Z}^+} R_n \times R_n \text{ and } S = \bigcup_{n \in \mathbb{Z}^+} S_n \times S_n.\qedhere
  \end{equation*}
\end{construction}

\begin{lemma}
  $R$ and $S$ are equivalence relations.
\end{lemma}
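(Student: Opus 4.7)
The plan is to observe that each of $R$ and $S$ has the shape $\bigcup_i A_i \times A_i$ for some family $\{A_i\}$ that partitions $\Sigma^*$, and then to invoke the general fact that any such union is automatically an equivalence relation. That general fact is a one-line check: if $\{A_i\}_{i\in I}$ partitions $\Sigma^*$, then reflexivity holds because each $x$ lies in some $A_i$, so $(x,x) \in A_i \times A_i$; symmetry is immediate from the symmetry of the Cartesian square $A_i \times A_i$; and transitivity holds because if $(x,y) \in A_i \times A_i$ and $(y,z) \in A_j \times A_j$, then $y \in A_i \cap A_j$, which forces $i=j$ by disjointness, so $(x,z) \in A_i \times A_i$. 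The whole problem therefore reduces to showing that the families $\{R_n\}_{n\in\mathbb{Z}^+}$ and $\{S_n\}_{n\in\mathbb{Z}^+}$ each partition $\Sigma^*$.

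For $\{R_n\}$ I would just unfold the definition: for $n \geq 2$, $R_n$ is exactly the set of strings of length $n$, while $R_1$ is the set of strings of length $0$ or $1$. So each string $x$ belongs to exactly one $R_n$, namely $R_{|x|}$ when $|x| \geq 2$ and $R_1$ when $|x| \leq 1$. This step is purely bookkeeping.

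For $\{S_n\}$ the argument is again about partitioning $\Sigma^*$ by length, but requires a small interval calculation. I would verify two things. First, consecutive intervals $[p_n(n)+1,\, p_{n+1}(n+1)]$ abut one another without overlap: the upper endpoint of $S_n$ is exactly one less than the lower endpoint $p_{n+1}(n+1)+1$ of $S_{n+1}$, and the first such interval (for $n = 2$) starts at $p_2(2)+1$, one more than the largest length covered by $S_1$. Second, the upper endpoints $p_{n+1}(n+1)$ tend to infinity, so that every length in $\mathbb{N}$ is eventually covered. Both follow from $p_n(n) = n \cdot n^n = n^{n+1}$ being strictly increasing and unbounded in $n$.

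I expect the only step that is not entirely mechanical to be this last interval bookkeeping for $\{S_n\}$; once it is done, the general observation about $\bigcup_i A_i \times A_i$ delivers all three equivalence-relation axioms for both $R$ and $S$ simultaneously.
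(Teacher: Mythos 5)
Your proposal is correct and follows essentially the same route as the paper: reduce the claim to showing that $\{R_n\}_n$ and $\{S_n\}_n$ partition $\Sigma^*$ (using the fact, already noted in the preliminaries, that a partition induces an equivalence relation via $\bigcup_i A_i \times A_i$), then verify coverage and disjointness, with the only nontrivial bookkeeping being the abutting intervals for $\{S_n\}_n$. The paper's disjointness check for $S_m, S_n$ is phrased for arbitrary $m < n$ rather than consecutive indices, but this is an immaterial difference.
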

\begin{proof}
  $R$ and $S$ are equivalence relations if $\{R_n\}_{n \in \mathbb{Z}^+}$ and $\{S_n\}_{n \in \mathbb{Z}^+}$ are valid partitions of $\Sigma^*$, so it suffices to show that the union of each collection includes all nonempty strings in $\Sigma^*$ and that each collection is pairwise disjoint.

  For $\{R_n\}_n$, any string of length $n$ is in $R_n$, so $\Sigma^* \subseteq \cup_n R_n$.
  If $m$ and $n$ are distinct positive integers, no string can have both length $m$ and length $n$, so $R_m \cap R_n = \emptyset$.
  Hence $\{R_n\}_n$ is a valid partition.

  For $\{S_n\}_n$, for any string $x$, there is an $n$ such that $p_n(n) + 1 \leq |x| \leq p_{n + 1}(n + 1)$, so every string in $\Sigma^*$ is in some $S_n$.
  To show pairwise disjointness, suppose $m$ and $n$ are distinct positive integers and assume without loss of generality that $m < n$, or in other words, that $m + 1 \leq n$.
  Then $p_{m + 1}(m + 1) \leq p_n(n) < p_n(n) + 1$, so no string of length at most $p_{m + 1}(m + 1)$ can also have length at least $p_n(n) + 1$.
  Hence, $S_m$ and $S_n$ are disjoint.
  Thus, $\{S_n\}_n$ is a valid partition.

  Since both collections are valid partitions, the relations $R$ and $S$ are both equivalence relations.
\end{proof}

Again, this is a special case of \autocite[Theorem~5.1]{bcffm}, but has a much simpler proof and sufficiently demonstrates that polynomial-time kernel reductions and polynomial-time many-one reductions are different.

\begin{theorem}\label{thm:different}
  There are equivalence relations $R$ and $S$ such that $R \mor S$ but $R \nkr S$.
  Furthermore, $R$ and $S$ are in $\NCOneEq$.
\end{theorem}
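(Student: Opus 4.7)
The plan is to verify three things separately: $R \mor S$, $R \nkr S$, and membership of both relations in $\NCOneEq$. For $R \mor S$, I would note that checking $\pair{x}{y} \in R$ reduces to comparing the lengths of $x$ and $y$, with a tiny case analysis for the $\lambda$ exception in $R_1$; the many-one reduction performs this test and outputs a hard-coded pair in $S$ (say $\pair{\lambda}{\lambda}$) when the test succeeds, and a hard-coded pair not in $S$ (say $\pair{0}{0^9}$, whose components lie in the distinct bands $S_1$ and $S_2$ because $p_2(2)=8$) otherwise.

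The heart of the proof is $R \nkr S$, which I would establish by diagonalization against the enumeration $f_1, f_2, \ldots$ supplied by \autoref{con:rands}. Given the candidate $f_i$, whose running time is bounded by $p_i(n) = in^i$, consider the $i+1$ strings $0^1, 0^2, \ldots, 0^{i+1}$, which lie in the $i+1$ distinct $R$-classes $R_1, R_2, \ldots, R_{i+1}$. Each image $f_i(0^k)$ has length at most $p_i(i+1) = i(i+1)^i$, and the straightforward inequality $i(i+1)^i < (i+1)^{i+2} = p_{i+1}(i+1)$ shows that no image can reach band $S_{i+1}$ or higher. So every $f_i(0^k)$ lies in one of the $i$ bands $S_1, \dotsc, S_i$, collapsing $i+1$ distinct $R$-classes into at most $i$ distinct $S$-classes. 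This contradicts the injectivity of the induced map $\hat{f}_i$ on equivalence classes guaranteed by \autoref{lem:image}.

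For the $\NCOneEq$ claim, $R$-membership only requires comparing $|x|$ and $|y|$ plus a tiny case analysis, which is a standard $\NC^1$ task. For $S$, determining the band of a length-$n$ string requires comparing $n$ to the cutoffs $p_j(j) = j^{j+1}$; since the relevant band index $j$ grows only like $O(\log n / \log\log n)$, these cutoffs fit in polylogarithmic bits and the necessary comparisons can be made by small circuits in $\NC^1$. The main obstacle in the whole argument is the length accounting for the diagonalization: the construction must be tight enough that a polynomial of degree $i$ cannot escape beyond the $i$-th band on inputs of length $i+1$, and the choice $p_i(n) = in^i$ together with the corresponding band widths is precisely what makes this work.
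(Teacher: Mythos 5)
Your proposal is correct and follows essentially the same route as the paper: the same length-comparison many-one reduction, the same diagonalization against the enumeration $f_1, f_2, \dotsc$ using the bound $p_i(i+1) < p_{i+1}(i+1)$ to collapse $i+1$ classes of $R$ into the $i$ bands $S_1, \dotsc, S_i$ (contradicting the injectivity from \autoref{lem:image}), and the same band-index computation for the $\NC^1$ claim.
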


The main idea behind this theorem is that no matter which polynomial-time function we consider as a possible kernel reduction, the number of equivalence classes in $R$ is greater than the number of equivalence classes in $S$, for sufficiently large strings.
\autoref{thm:density} doesn't apply in this setting because both $R$ and $S$ are kernel sparse.
Since we have carefully constructed these sets, $S$ is more kernel sparse than $R$.
This basic idea was presented independently in \autocite[Lemma~2.3]{gz14}.

Though it is not explicitly stated here, this theorem can be generalized to kernel reductions with other (non-polynomial) time bounds in a straightforward manner.

\begin{proof}[Proof of \autoref{thm:different}]
  Let $R$ and $S$ be the equivalence relations in \autoref{con:rands}.
  The following function is a polynomial-time many-one reduction from $R$ to $S$.
  On input $\pair{x}{y}$, if $|x| = |y|$ (or if $|x|$ and $|y|$ are both in $\{0, 1\}$), output $\pair{a}{a}$, otherwise output $\pair{a}{b}$, where $a$ is a string in $S_1$ and $b$ is a string in $S_2$.
  Computing and comparing the lengths of $x$ and $y$ can be done in linear time and writing the output requires only a constant number of steps, since the lengths of $a$ and $b$ are independent of the lengths of $x$ and $y$.
  The correctness of the reduction follows from the fact that $a$ and $b$ are in different equivalence classes.
  Therefore there is a polynomial-time many-one reduction from $R$ to $S$.

  Now assume with the intention of producing a contradiction that there is a polynomial-time kernel reduction from $R$ to $S$.
  Since $f_1, f_2, \dotsc$ is an enumeration of all polynomial-time computable functions, the reduction from $R$ to $S$ is $f_n$, with running time $p_n$, for some positive integer $n$.
  Consider a string $x$ of length $n + 1$ (for example, $x = 1^{n + 1}$); $x$ is in equivalence class $R_{n + 1}$.
  Since the running time of $f_n$ is $p_n$, the length of $f_n(x)$ is at most $p_n(n + 1)$.
  Since $p_1, p_2, \dotsc$ is an increasing sequence (in the sense that $p_j(n) < p_{j + 1}(n)$ for all natural numbers $n$ and all positive integers $j$), we have $p_n(n + 1) < p_{n + 1}(n + 1) < p_{n + 1}(n + 1) + 1$.
  By the construction of $R$ and $S$, we have $\#R(n + 1) = n + 1$ and $\#S(p_n(n + 1)) \leq \#S(p_{n + 1}(n + 1)) = n$ (for an illustration, see \autoref{fig:sparse}).
  By the pigeonhole principle, there must be two strings $x$ and $y$ of length at most $n + 1$ in different equivalence classes of $R$ whose image under $f_n$ is in the same equivalence class of $S$.
  Since $\pair{x}{y} \notin R$ if and only if $\pair{f(x)}{f(y)} \notin S$, this is a contradiction.
  Therefore $R \nkr S$.

  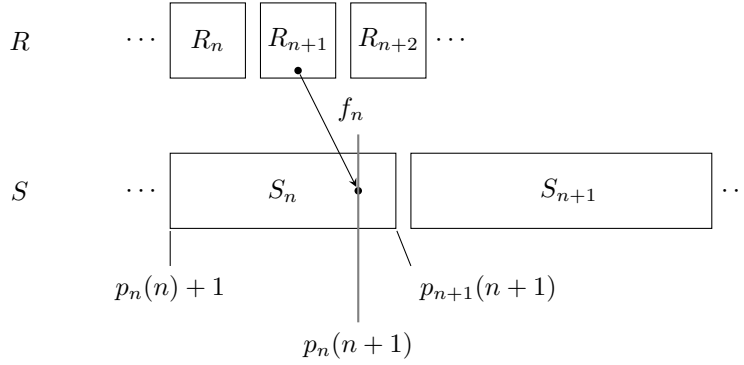
\begin{figure}
    \caption{\label{fig:sparse}For a fixed kernel reduction $f_n$ running in time $p_n$, the image of a string of length $n + 1$ can only be a string of length at most $p_n(n + 1)$.
      The number of equivalence classes in $R$ for strings of length $n + 1$ is greater than the number of equivalence classes in $S$ for strings of length $p_n(n + 1)$ for each polynomial $p_n$.}
    \begin{center}
      \begin{tikzpicture}[yscale=0.5]
        \begin{scope}[shift={(0, 5)}]
          \node at (0, 1) {$R$};

          \draw (2, 0)
          ++(0, 1) node[anchor=east] {$\dotsb$} ++(0, -1)
          rectangle ++(1, 2)
          ++(0.2, -2)
          rectangle ++(1, 2)
          ++(0.2, -2)
          rectangle ++(1, 2)
          ++(0, -1)
          node[anchor=west] {$\dotsb$}
          ;

          \path
          (2.5, 1) node {$R_n$}
          ++(1.2, 0) node {$R_{n + 1}$}
          ++(1.2, 0) node {$R_{n + 2}$}
          ;
        \end{scope}

        \begin{scope}[shift={(0, 1)}]
          \draw
          (4.5, 1) node[shape=circle, fill=black, inner sep=1pt] (source) {}
          ++(-0.8, 3.2) node[shape=circle, fill=black, inner sep=1pt] (target) {};
          \path[<-, >=stealth] (source) edge node[anchor=south west] {$f_n$} (target);
        \end{scope}

        \begin{scope}[shift={(0, 1)}]
          \node at (0, 1) {$S$};

          \draw (2, 0)
          ++(0, 1) node[anchor=east] {$\dotsb$} ++(0, -1)
          rectangle ++(3, 2)
          ++(0.2, -2)
          rectangle ++(4, 2)
          ++(0, -1)
          node[anchor=west] {$\dotsb$}
          ;

          \path
          (3.5, 1) node {$S_n$}
          ++(3.8, 0) node {$S_{n + 1}$}
          ;
        \end{scope}

        \begin{scope}
          \draw[shorten >=1pt]
          (2, 0) node[anchor=north] (pn) {$p_n(n) + 1$}
          -- ++(0, 1);
          \draw[thick, color=gray] (4.5, 3.5)
          -- ++(0, -5)
          node[anchor=north, color=black] (pnplus1) {$p_n(n + 1)$}
          ;
          \draw[shorten >=1pt]
          (5.2, 0) node[anchor=north west, color=black] (pnplus1nplus1) {$p_{n + 1}(n + 1)$}
          -- ++(-0.2, 1);
          ;
        \end{scope}
      \end{tikzpicture}
    \end{center}
  \end{figure}

  Finally, we show that $R$ and $S$ are in $\NC^1$.
  Deciding whether two strings have the same length is trivial, so $R$ is certainly in $\NC^1$.
  To decide $S$, we compute the index $i$ of the equivalence class $S_i$ containing the string $x$ and the index $j$ of the equivalence class $S_j$ containing the string $y$, then compare them for equality.
  Computing the index $i$ of the equivalence class of a string $x$ of length $n$ can be performed as follows.
  First, compute in parallel the values $p_1(1), \dotsc, p_{n + 1}(n + 1)$.
  Since each $p_i$ is increasing, $n$ is definitely smaller than $p_{n + 1}(n + 1)$.
  Computing the exponentiation of $O(n)$ pairs of strings of length $O(n)$ each can be performed by a $\TC^0$ circuit, and $\TC^0 \subseteq \NC^1$.
  Next, for each $i \in \{1, \dotsc, n\}$ in parallel, decide if $p_i(i) + 1 \leq n \leq p_{i + 1}(i + 1)$, thereby determining whether the input is in $S_i$.
  These comparisons can be performed by an $\NC^1$ circuit.
  Finally, use $O(\log n)$ single-bit multiplexers in parallel to output the index (in binary) of the sole equivalence class $S_i$ containing $x$.
  A single-bit multiplexer for $O(\log n)$ input bits can be implemented by a circuit of size $O(\log n)$ and depth $O(\log \log n)$ \autocite[Lemma~2.5.5]{savage98}, so this phase of the computation can be performed by an $\NC^1$ circuit.
  Computing the indices $i$ and $j$ for the two input strings can be performed in parallel, and the final comparison for equality of $i$ and $j$ adds only $O(\log n)$ depth to the circuit.
  Therefore $S \in \NC^1$.
\end{proof}

\section
    [Conditions for complete problems under polynomial-time kernel reductions]
    {Conditions for complete problems \\ under polynomial-time kernel reductions}
\label{sec:generalcompleteness}
%
Most well-behaved complexity classes contain problems that are complete under many-one reductions.
Do the corresponding classes of equivalence problems contain problems that are complete under kernel reductions?
Having access to a complete problem offers many benefits and improves our understanding of equivalence problems in general.
In \autocite[Theorem~8.7]{bcffm}, the authors constructed a complete problem with respect to polynomial-time kernel reductions for $\NPEq$ under the assumption that $\NP = \coNP$.
Since we consider that assumption unlikely, we determine sufficient conditions for having a complete problem under polynomial-time kernel reductions.
This section presents a more general theorem that implies as a corollary a complete problem for $\NPEq$ under the assumption $\NP = \coNP$.

%
By extending the technique of \autocite[Theorem~8.7]{bcffm}, we find that $\PSPACEEq$ has a complete problem under polynomial-time kernel reductions unconditionally.
We also show that each level of the polynomial-time hierarchy contains an equivalence problem that is hard for the lower levels under these reductions.
This means that some well-known classes do have complete problems, and the existence for complete problems in other classes, like $\NP$ and even $\P$, remains possible.
The existence of a natural complete problem remains open.

We need one additional definition in order to describe the complexity classes that contain a hard problem under kernel reductions.
If $\mathcal{C}$ is a complexity class then the class $\forall\mathcal{C}$ is the set of languages $A$ such that there exists a language $B\in\mathcal{C}$ and a polynomial $p$ satisfying $x\in A$ if and only if $\forall w\in\Sigma^{\leq p(|x|)} \pair{x}{w}\in B$.
$\forall\mathcal{C}$ is called the \defn{closure of $\mathcal{C}$ under polynomially bounded universal quantification}.

\begin{theorem}\label{thm:generalcompleteness}
  Let $\mathcal{C}$ be a subset of $\PSPACE$ which contains the problem of deciding whether two strings are equal.
  Then there exists an equivalence relation in $\CRAZYEq$ which is hard for $\CEq$ under $\kr$ reductions.
\end{theorem}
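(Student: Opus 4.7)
The plan is to construct a single universal equivalence relation $E$ that encodes every relation in $\CEq$ in parallel, so that each $R \in \CEq$ reduces to $E$ by the trivial kernel reduction that tags each input with an index for $R$'s decider. I would fix an enumeration $M_1, M_2, \dotsc$ of polynomially clocked Turing machines deciding languages in $\mathcal{C}$, and on strings of the form $\pair{i}{x}$ (with a default singleton equivalence class for strings not of this shape) define
\begin{displaymath}
  \pair{\pair{i}{x}}{\pair{j}{y}} \in E \iff i = j \text{ and } \bigl( x = y \text{ or } \bigl( L(M_i) \text{ is an equivalence relation and } \pair{x}{y} \in L(M_i) \bigr) \bigr).
\end{displaymath}

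The routine checks come first. The relation $E$ is indeed an equivalence relation because its $i$-slice is either $L(M_i)$ (when $L(M_i)$ happens to be an equivalence relation) or the identity relation on strings, and slices at distinct indices never interact. For $\CEq$-hardness, given $R \in \CEq$ with decider $M_i$, the map $f(x) = \pair{i}{x}$ lies in $\FP$; the clause ``$L(M_i)$ is an equivalence relation'' is automatically true because $R$ is, so the $E$-condition collapses to $\pair{x}{y} \in L(M_i) = R$, with reflexivity of $R$ handling the $x = y$ branch. The hypothesis that $\mathcal{C}$ contains equality is used here to ensure the $x = y$ test can be performed in the verifier.

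The main obstacle is showing $E \in \CRAZYEq$. The plan is to pull the universal quantifier hidden in ``$L(M_i)$ is an equivalence relation'' outside the other conjuncts, writing
\begin{displaymath}
  \pair{\pair{i}{x}}{\pair{j}{y}} \in E \iff \forall (a, b, c) \colon \bigl[ i = j \text{ and } \bigl( x = y \text{ or } (\pair{x}{y} \in L(M_i) \text{ and the equivalence axioms hold for } M_i \text{ on } (a, b, c)) \bigr) \bigr],
\end{displaymath}
which is valid because the axiom clause does not depend on $(x, y)$ while the remaining clauses do not depend on $(a, b, c)$. The inner predicate is then a bounded Boolean combination of equality tests and $O(1)$ membership queries to $L(M_i) \in \mathcal{C}$. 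For $\mathcal{C} = \PSPACE$ the inner predicate is itself in $\PSPACE$, giving unconditional completeness of a $\PSPACEEq$-problem. For classes like $\mathcal{C} = \NP$ that are not known to be closed under complement, each membership query is expanded as an existential over a polynomial-length witness, each axiom becomes a $\forall\exists$ predicate, and the quantifiers are merged so that the whole membership test takes the $\forall\exists$ form of a $\P$-predicate, which is exactly $\forall\NP = \Pi_2^P$; analogous bookkeeping at each level $\Sigma_k^P$ gives the claimed corollary for the polynomial hierarchy.
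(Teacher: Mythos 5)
Your reduction and the overall shape of the construction (a universal relation indexed by machines, with a guard clause forcing each slice to be an equivalence relation, falling back to equality otherwise) is the same strategy as the paper's. But there are two genuine gaps, and the first is the central difficulty the paper's proof is designed to overcome. The guard clause ``$L(M_i)$ is an equivalence relation'' is a universal statement over \emph{all} triples of strings; for polynomially clocked machines this predicate is undecidable (one can encode the complement of the halting problem into whether a linear-time machine accepts a symmetric language), so your $E$ as written is not in $\CRAZYEq$ --- indeed not in $\PSPACE$ at all, even when $\mathcal{C} = \PSPACE$. Pulling the quantifier ``$\forall (a,b,c)$'' to the front does not repair this, because the class $\forall\mathcal{C}$ only permits universal quantification over strings of length polynomially bounded in the input, whereas your quantifier is unbounded. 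The necessary idea, which the paper supplies, is to check the equivalence axioms only on strings of length at most $|x|$ and at most $|y|$, and then to argue separately that this \emph{truncated} guard still yields a genuine equivalence relation: e.g.\ for transitivity one must observe that membership of $\pair{\triple{M}{x}{1^{t_x}}}{\triple{M}{y}{1^{t_y}}}$ and $\pair{\triple{M}{y}{1^{t_y}}}{\triple{M}{z}{1^{t_z}}}$ certifies the axioms up to length $\max(|x|,|y|,|z|)$, which covers the triple $(x,y,z)$ and also covers the bound needed for the pair $(x,z)$. Your proposal never confronts this length-bounding step, and without it the construction does not go through.

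The second gap is the absence of a time tally. Your instances have the form $\pair{i}{x}$, so deciding $E$ requires simulating $M_i$ on $\pair{x}{y}$ for $k_i\,|\pair{x}{y}|^{k_i}$ steps, where the exponent $k_i$ grows with the index $i$; this is not polynomial in $|\pair{i}{x}| + |\pair{j}{y}|$ uniformly over $i$, so no single $\CRAZY$ (or even $\PSPACE$) algorithm decides $E$. The paper avoids this by padding each instance to $\triple{M}{x}{1^{t_x}}$, where the unary tally $1^{t_x}$ supplies the simulation budget as part of the input length; the kernel reduction $w \mapsto \triple{M}{w}{1^{6p(|\pair{w}{w}|)+c}}$ remains polynomial-time because $p$ is fixed once $R$ is fixed. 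You would need to add both the length-bounded axiom check and the tally, and then redo the verification that the padded, truncated relation is reflexive, symmetric, and transitive.
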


Before proving this theorem, we will provide some immediate corollaries of this general result.

\begin{corollary}\label{cor:sufficient}
  If $\mathcal{C}$ is a subset of $\PSPACE$ and $\mathcal{C}=\CRAZY$, then $\CEq$ has a complete problem under $\kr$ reductions.
\end{corollary}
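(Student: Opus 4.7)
The plan is to derive this corollary as an almost immediate consequence of \autoref{thm:generalcompleteness}. That theorem, applied to a class $\mathcal{C} \subseteq \PSPACE$ containing the equality problem, produces an equivalence relation $R$ lying in $(\forall\mathcal{C})\Eq$ which is hard for $\CEq$ under $\kr$ reductions. So the work of actually constructing a hard problem has already been done; what remains is to collapse ``hard'' into ``complete'' by pulling the containing class down using the hypothesis $\mathcal{C} = \forall\mathcal{C}$.

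Concretely, I would proceed in three short steps. First, invoke \autoref{thm:generalcompleteness} on the given $\mathcal{C}$ to obtain an equivalence relation $R$ with $R \in (\forall\mathcal{C})\Eq$ and $R$ is $\CEq$-hard under $\kr$. Second, observe that the hypothesis $\mathcal{C} = \forall\mathcal{C}$ implies $(\forall\mathcal{C})\Eq = \CEq$ (each side is just the collection of equivalence relations whose induced language lives in the corresponding class, and these underlying classes are equal by assumption). Therefore $R \in \CEq$. Third, combine membership and hardness: $R$ is $\CEq$-hard under $\kr$ reductions and $R \in \CEq$, so by the definition of $\CEq$-completeness given at the end of \autoref{sec:preliminaries}, $R$ is $\CEq$-complete under $\kr$ reductions.

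The only subtlety, and the one place I would slow down, is checking the side condition of \autoref{thm:generalcompleteness} that $\mathcal{C}$ contains the equality problem. The corollary's hypotheses do not state this explicitly, so I would either (i) adopt it as an implicit standing hypothesis inherited from the theorem, or (ii) note that any class satisfying $\mathcal{C} = \forall\mathcal{C}$ and sitting inside $\PSPACE$ of interest for this paper already contains $\P$, and hence contains equality. Either remark is enough to make the application of \autoref{thm:generalcompleteness} legitimate, after which the corollary follows in one line. No genuine obstacle arises; the content of the corollary is entirely packaged in the preceding theorem, and this proof is purely a matter of unpacking definitions.
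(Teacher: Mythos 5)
Your proof is correct and is precisely the intended derivation: the paper offers no separate argument for this corollary, treating it as immediate from \autoref{thm:generalcompleteness} in exactly the way you describe --- the hypothesis $\mathcal{C}=\forall(\mathcal{C}\cup\mathsf{co}\mathcal{C})$ pulls the hard problem's containing class $(\forall(\mathcal{C}\cup\mathsf{co}\mathcal{C}))\mathsf{Eq}$ down to $\CEq$, upgrading hardness to completeness. Your observation that the corollary's statement silently drops the theorem's ``contains the equality problem'' hypothesis is a fair catch (it must be inherited implicitly, as it does not follow formally from $\mathcal{C}=\forall(\mathcal{C}\cup\mathsf{co}\mathcal{C})$ alone), and either of your fixes is adequate.
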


\begin{corollary}\label{cor:hardproblems}
  Under polynomial-time kernel reductions,
  \begin{enumerate}
  \item $\PSPACEEq$ has a complete problem,
  \item $\PKPEq$ contains a problem that is hard for $\DKPEq$, for all $k \geq 1$.
  \end{enumerate}
\end{corollary}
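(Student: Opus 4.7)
The plan is to derive both parts directly from Theorem~\ref{thm:generalcompleteness} by choosing an appropriate class $\mathcal{C}$ in each case and then observing that the closure $\forall\mathcal{C}$ collapses back inside the claimed upper class.

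For part~1 I would set $\mathcal{C}=\PSPACE$. Equality of binary strings is decidable in $\P\subseteq\PSPACE$, so the hypothesis of the theorem is met and it yields an equivalence relation in $\forall\PSPACE$ that is hard for $\PSPACEEq$ under $\kr$ reductions. To deduce membership in $\PSPACEEq$, I would invoke the standard fact $\forall\PSPACE=\PSPACE$: a polynomial-space machine can enumerate all witnesses $w\in\Sigma^{\leq p(|x|)}$ using a counter of length $p(|x|)$ and, for each $w$, simulate the inner $\PSPACE$ procedure on $\pair{x}{w}$ while reusing its work tape, accepting if and only if every simulation accepts. This lands the hard equivalence relation in $\PSPACEEq$, giving a $\PSPACEEq$-complete problem.

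For part~2 I would set $\mathcal{C}=\DKP=\Delta_k^P$, which contains equality since $\P\subseteq\Delta_k^P$. The theorem then produces an equivalence relation in $\forall\Delta_k^P$ that is hard for $\DKPEq$. The inclusion $\Delta_k^P\subseteq\Pi_k^P$ gives $\forall\Delta_k^P\subseteq\forall\Pi_k^P$, and any $\Pi_k^P$ predicate already begins with a universal quantifier, so an extra outermost universal quantifier can be merged by pairing the two witness strings; hence $\forall\Pi_k^P=\Pi_k^P=\PKP$. The hard equivalence relation therefore lies in $\PKPEq$, as claimed.

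No substantive obstacle is expected: once Theorem~\ref{thm:generalcompleteness} is available, the corollary reduces to the two routine closure facts $\forall\PSPACE=\PSPACE$ and $\forall\Pi_k^P=\Pi_k^P$ together with $\Delta_k^P\subseteq\Pi_k^P$. The only thing requiring a little care is lining up the notational conventions so that the closure of the chosen $\mathcal{C}$ lands at the right level of the polynomial hierarchy; once $\DKP$ is identified with $\Delta_k^P$ and $\PKP$ with $\Pi_k^P$, the chain $\forall\DKP\subseteq\forall\PKP=\PKP$ delivers the result immediately.
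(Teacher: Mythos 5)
Your proposal is correct and follows essentially the same route as the paper: both parts are obtained by instantiating \autoref{thm:generalcompleteness} with $\mathcal{C}=\PSPACE$ and $\mathcal{C}=\DKP$ respectively, and then collapsing the universal quantification ($\forall\PSPACE=\PSPACE$, and $\forall\DKP\subseteq\PKP$ via closure of $\DKP$ under complement and quantifier merging, where the paper states the equality $\forall\DKP=\PKP$ outright). The only cosmetic difference is that the paper explicitly notes the intermediate class is $\forall(\mathcal{C}\cup\mathsf{co}\mathcal{C})$ before using closure under complement, a step your argument absorbs without loss.
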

\begin{proof}\mbox{}
  \begin{enumerate}
  \item $\PSPACE$ is closed under complement (because it is a deterministic complexity class) and polynomially bounded universal quantification (because we can simulate the universal guess deterministically in polynomial space).
  \item
    First, $(\forall(\DKP \cup \mathsf{co}\DKP))\mathsf{Eq} = (\forall \DKP)\mathsf{Eq}$, since $\DKP$ is closed under complement.  
    Next, $(\forall \DKP)\mathsf{Eq} = \PKPEq$, since $\forall \DKP = \PKP$.  
    Now if we choose $\mathcal{C} = \DKP$ in \autoref{thm:generalcompleteness}, then $\PKPEq$ has a problem that is hard for $\DKPEq$ under $\kr$ reductions.
    \qedhere
  \end{enumerate}
\end{proof}

More specifically, this means that $\coNPEq$ (which equals $\POPEq$) has a problem that is $\kr$-hard for $\PEq$ (which equals $\DOPEq$).
This corollary also leads to \autocite[Theorem~8.7, part~1]{bcffm}, which is restated here.

\begin{corollary}[{\autocite[Theorem~8.7, part~1]{bcffm}}]
  If $\NP = \coNP$ then $\NPEq$ has a complete problem under polynomial-time kernel reductions.
\end{corollary}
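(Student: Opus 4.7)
The plan is to obtain the corollary as a direct instance of Corollary~\ref{cor:sufficient} applied with $\mathcal{C} = \NP$. Since $\NP \subseteq \PSPACE$, the only substantive obligation is to verify that, under the hypothesis $\NP = \coNP$, the class $\NP$ equals its closure under complement and polynomially bounded universal quantification, that is, $\NP = \forall(\NP \cup \coNP)$. Granting this equality, Corollary~\ref{cor:sufficient} produces an equivalence relation in $\NPEq$ that is $\kr$-hard for $\NPEq$, which is exactly an $\NPEq$-complete problem under polynomial-time kernel reductions.

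I would handle the required equality by proving two containments. The containment $\NP \subseteq \forall(\NP \cup \coNP)$ is immediate: for any $A \in \NP$, choose the witness language $B = \{\pair{x}{w} \mid x \in A\}$ (which ignores $w$) together with any polynomial bound $p$, so that $x \in A$ iff $\forall w \in \Sigma^{\leq p(|x|)}\ \pair{x}{w} \in B$. For the reverse containment, suppose $A \in \forall(\NP \cup \coNP)$. Under the hypothesis, $\NP \cup \coNP = \NP$, so there exist a polynomial $p$ and a language $B \in \NP$ with $x \in A \iff \forall w \in \Sigma^{\leq p(|x|)}\ \pair{x}{w} \in B$. Complementing yields $x \in \overline{A} \iff \exists w \in \Sigma^{\leq p(|x|)}\ \pair{x}{w} \in \overline{B}$, and since $\overline{B} \in \coNP = \NP$, the closure of $\NP$ under polynomially bounded existential quantification of $\NP$ predicates gives $\overline{A} \in \NP$. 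One more application of $\NP = \coNP$ then yields $A \in \NP$, completing the equality.

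No combinatorial difficulty remains, since the construction of the hard equivalence relation is already carried out in Theorem~\ref{thm:generalcompleteness}. The only genuine ingredient used beyond that theorem is the standard first-level collapse of the polynomial hierarchy implied by $\NP = \coNP$, rephrased here as $\forall \NP \subseteq \NP$; the one bit of care required is matching this collapse against the formal definition of the closure operator $\forall(\cdot)$, so that the hypothesis of Corollary~\ref{cor:sufficient} is unambiguously satisfied.
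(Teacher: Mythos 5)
Your proof is correct and is essentially the paper's argument: both obtain the result as an instance of \autoref{thm:generalcompleteness} by using the fact that $\NP = \coNP$ renders the polynomially bounded universal quantifier harmless. The only difference is packaging---you verify the hypothesis of \autoref{cor:sufficient} directly for $\mathcal{C} = \NP$ (checking $\NP = \forall(\NP\cup\coNP)$ by complementation and existential closure), while the paper invokes \autoref{cor:hardproblems} at the second level of the polynomial hierarchy and then applies the collapse.
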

\begin{proof}
  If $\NP = \coNP$, then the polynomial hierarchy collapses to $\POP$, and specifically $\PTP = \DTP = \POP = \coNP = \NP$.
  From \autoref{cor:hardproblems} we conclude that $\NPEq$ has a $\kr$-hard problem for $\NPEq$.
  Such a problem is by definition $\NPEq$-complete.
\end{proof}

We now return to the proof of \autoref{thm:generalcompleteness} by first providing some motivating ideas.
Recall the canonical complete problem (sometimes called the ``universal'' problem) for $\NP$ (and indeed for various other complexity classes):
\begin{displaymath}
  K = \lb\triple{M}{x}{1^t} \st M\plain{accepts} x \plain{within} t \textnormal{ steps}\rb
\end{displaymath}
The idea of this proof is to adapt this into an equivalence relation $R_K$ consisting of pairs of triples of the form $\pair{\triple{M}{x}{1^{t_x}}}{\triple{M}{y}{1^{t_y}}}$, where $M$ accepts $\pair{x}{y}$, as in the reduction from an arbitrary $\NP$ language to $K$.
The problem we encounter here is that $R_K$ is not necessarily an equivalence relation.
Consider, for example, transitivity, which must be satisfied for all possible pairs of the form $\triple{M}{w}{1^{t_w}}$.
For \emph{arbitrary machines} $M$, just because $M$ accepts $\pair{x}{y}$ and $\pair{y}{z}$ does not necessarily mean that $M$ accepts $\pair{x}{z}$.
The solution is to encode into $R_K$ the requirement that the language which $M$ accepts, $L(M)$, is itself an equivalence relation.
The three properties required of $R_K$ then follow from the properties of $L(M)$.
%
%
\begin{proof}[Proof of \autoref{thm:generalcompleteness}]
  First we will define a helper algorithm which decides whether a given machine accepts an equivalence relation on strings up to a given length.
  Define the algorithm $A$ as follows on input $\pair{M}{n}$, where $M$ is a polynomially clocked Turing machine of type $\mathcal{C}$ and $n\in\mathbb{N}$:
  \begin{enumerate}
  \item universally guess $a,b,$ and $c\in\Sigma^{\leq n}$,
  \item simulate $M$ on $\pair{a}{a}$; if it rejects, reject,
  \item simulate $M$ on $\pair{a}{b}$, then on $\pair{b}{a}$; if the former accepts and the latter rejects, reject,
  \item simulate $M$ on $\pair{a}{b}$, then on $\pair{b}{c}$, then on $\pair{a}{c}$; if the first two accept and the last one rejects, reject,
  \item if execution reaches this point, accept.
  \end{enumerate}
  These simulations check that $L(M)$ satisfies reflexivity, symmetry, and transitivity on strings of length at most $n$.
  If $A$ accepts, then the three properties are satisfied, and if it rejects then one of the three properties is violated.
  Since $M$ is a machine of type $\mathcal{C}$, checking if $M$ accepts on some input and if $M$ rejects on some input is in $\mathcal{C}\cup\mathsf{co}\mathcal{C}$.
  The universal guesses of $a,b,$ and $c$ (of length at most $n$) followed by checks of whether the six simulations of $M$ accept or reject place $L(A)$ in the class $\CRAZY$.
  If $p$ is the polynomial which bounds the running time of $M$, then the running time of this algorithm is $6p\left(\left|\pair{1^n}{1^n}\right|\right)+c$, where $c$ is a constant which represents the time needed to account for the implementation of $A$ (the control of the simulations of $M$, performing logical conjunctions, etc.).
  Hence the running time of $A$ is polynomial in $n$.

  Now we can define the set $R_K$ as follows.
  A pair of strings $\pair{u}{v}$ is in $R_K$ if and only if either $u = v$ or $u$ and $v$, when interpreted as strings of the form $\triple{M}{x}{1^{t_x}}$ and $\triple{M}{y}{1^{t_y}}$, respectively, satisfy the four conditions
  \begin{enumerate}
  \item\label{itm:machine} $M$ is a polynomially clocked Turing machine of type $\mathcal{C}$,
  \item\label{itm:emx} $A$ accepts $\pair{M}{|x|}$ within $t_x$ steps,
  \item\label{itm:emy} $A$ accepts $\pair{M}{|y|}$ within $t_y$ steps,
  \item\label{itm:accepts} $M$ accepts $\pair{x}{y}$.
  \end{enumerate}
  We claim that $R_K$ is in $\CRAZYEq$ and $\CEq$-hard.

  First we show that $R_K\in\CRAZY$.
  By the argument above, $A$ is a $\CRAZY$ algorithm.
  Assuming without loss of generality that $|x|\geq |y|$, if $A$ accepts $\pair{M}{|x|}$ within $t_x$ steps then we know that there is a polynomial-time bound on the running time of $M$ on input $\pair{x}{y}$, so simulating it is certainly in $\CRAZY$.
  Finally, testing for equality is in $\mathcal{C}$ by hypothesis so deciding $R_K$ overall can be performed by a $\CRAZY$ algorithm.

  Next we show that $R_K$ is an equivalence relation.
  Reflexivity follows from the reflexivity of the equality relation.
  For symmetry, suppose that the pair $\pair{\triple{M}{x}{1^{t_x}}}{\triple{M}{y}{1^{t_y}}}$ is in $R_K$.
  Since \autoref{itm:emx} and \autoref{itm:emy} are true by hypothesis, we know that symmetry on strings of length at most $\max(|x|, |y|)$ in $L(M)$ is satisfied, and that includes the strings $x$ and $y$.
  So since $M$ accepts $\pair{x}{y}$ it must follow that $M$ accepts $\pair{y}{x}$.
  Furthermore, \autoref{itm:machine}, \autoref{itm:emx}, and \autoref{itm:emy} are the same up to symmetry of $x$ and $y$, so we have $\pair{\triple{M}{y}{1^{t_y}}}{\triple{M}{x}{1^{t_x}}}\in R_K$.
  For transitivity, suppose that both $\pair{\triple{M}{x}{1^{t_x}}}{\triple{M}{y}{1^{t_y}}}\in R_K$ and $\pair{\triple{M}{y}{1^{t_y}}}{\triple{M}{z}{1^{t_z}}}\in R_K$.
  Since transitivity is true on strings of length at most $\max(|x|, |y|, |z|)$ by the transitivity propositions checked by \autoref{itm:emx} and \autoref{itm:emy}, and since $M$ accepts both $\pair{x}{y}$ and $\pair{y}{z}$ by hypothesis, it must follow that $M$ accepts $\pair{x}{z}$.
  Again the conditions in \autoref{itm:machine}, \autoref{itm:emx}, and \autoref{itm:emy} are the same.
  We have shown that $R_K$ is reflexive, symmetric, and transitive, so it is an equivalence relation.
  At this point, we have proven that $R_K\in\CRAZYEq$.

  Now we need to show that $R_K$ is $\CEq$-hard.
  Let $S\in\CEq$.
  Suppose $M$ is the polynomially clocked $\mathcal{C}$ machine that decides $S$, and $p$ is the polynomial that bounds the running time of $M$.
  Then the kernel reduction from $S$ to $R_K$ is $w\mapsto\triple{M}{w}{1^{6p(|\pair{w}{w}|)+c}}$, where $p$ and $c$ are the polynomial and constant described in the first paragraph of this proof.
  Call this reduction $f$.
  The reduction is obviously computable in time polynomial in $|w|$.
  It remains to show that this reduction is correct.

  Suppose $\pair{x}{y}\in S$.
  Now $f(x)=\triple{M}{x}{1^{6p(|\pair{x}{x}|)+c}}$ and, similarly, $f(y)=\triple{M}{y}{1^{6p(|\pair{y}{y}|)+c}}$.
  \autoref{itm:machine} is true by construction, and \autoref{itm:accepts} is true since $M$ is the machine which decides $S$.
  Assume \autoref{itm:emx} is false.
  Then $M$ does not accept an equivalence relation on strings of length at most $|x|$.
  This is a contradiction, since $M$ decides $S$, an equivalence relation, by hypothesis.
  Therefore \autoref{itm:emx} must be satisfied.
  The same argument applies to \autoref{itm:emy}.
  Hence $\pair{f(x)}{f(y)}\in R_K$.

  If $\pair{x}{y}\notin S$ then $M$ does not accept $\pair{x}{y}$, since otherwise $\pair{x}{y}$ would be a member of $S$.
  Hence $\pair{x}{y}\notin R_K$.
  Therefore we have shown that $R_K$ is $\CEq$-hard.
\end{proof}

\begin{openproblem}
  Is there a more general characterization of complexity classes which have a $\kr$-hard problem?
\end{openproblem}

\begin{openproblem}\label{open:npeqc}
  Is there an oracle relative to which $\PEq$ or $\NPEq$ has a complete problem under polynomial-time kernel reductions?
  We conjecture that $\NPEq$ has a complete problem without relativization.
\end{openproblem}

\begin{openproblem}
  Is the converse of \autoref{cor:sufficient}, or perhaps a partial converse, true?
  In other words, is it true that the existence of a $\CEq$-complete problem problem implies closure under complement or universal quantification (or both)?
  If so, this would be evidence that no $\NPEq$-complete problem exists, since this would imply $\NP = \coNP$.
\end{openproblem}

\begin{openproblem}
  Can this theorem be used to construct $\krnt$-hard problems for smaller complexity classes such as $\NLEq$ under the appropriate time-bounded reduction?
  Larger classes such as $\EXPEq$?
\end{openproblem}

\begin{openproblem}
  To what other equivalence relations does our $\kr$-hard problem reduce?
  Are there ``natural'' $\kr$-hard problems in complexity classes which satisfy the conditions in \autoref{thm:generalcompleteness}?
\end{openproblem}

\section
    [Relationship between completeness under kernel and many-one reductions]
    {Relationship between completeness \\ under kernel and many-one reductions}
\label{sec:npeqcompleteness}
%
A kernel reduction implies a many-one reduction, but does completeness under kernel reductions imply completeness under many-one reductions?
Since polynomial-time kernel reductions are different from polynomial-time many-one reductions (\autoref{thm:different}), completeness in classes of equivalence problems may differ under these reductions as well.
We determine the conditions under which completeness under kernel reductions implies completeness under many-one reductions.

%
We find that completeness under many-one reductions follows as a straightforward consequence of completeness under kernel reductions as long as the relevant complexity class admits a complete problem under many-one reductions.
We also show that the kernel reduction is essentially too weak to allow for completeness under injective (that is, ``one-to-one'') reductions, for combinatorial reasons similar to those in \autoref{sec:limitations}.
Though we prove these results for $\NPEq$, they generalize in a natural way to any ``well-behaved'' complexity class (basically, any class containing a complete problem under many-one reductions).
These results are more indication that when comparing the relative difficulty of equivalence problems, one should attempt to construct a kernel reduction instead of a many-one reduction.
The potential lack of a complete problem under injective kernel reductions suggests that a conjecture analagous to the Berman--Hartmanis conjecture, which states that all $\NP$-complete problems are isomorphic with respect to many-one reductions, may be false in $\NPEq$.

One can infer the existence of $\NP$-complete equivalence relations from the relation suggested in \autocite[Section~6.2]{fg11},
\begin{equation*}
  \{\pair{0\phi}{1\phi} \, | \, \phi \in \textsc{Satisfiability}\}.
\end{equation*}
(This relation is not itself an equivalence relation, but can be modified to guarantee the three necessary properties.)
Using this idea, we provide a strategy for constructing a more natural $\NP$-complete equivalence relation from an equivalence relation in $\NP$ and an arbitrary $\NP$-complete property.

Let \textsc{GI} denote the equivalence relation consisting of all pairs of isomorphic graphs.
A property, that is, a Boolean function, $\Pi$ is an \defn{$\NP$-complete property} if $L_\Pi$, the set of all strings for which $\Pi$ is true, is $\NP$-complete.
If, furthermore, the property satisfies $\pair{x}{y} \in R$ implies $\Pi(x) = \Pi(y)$ where $R$ is an equivalence relation, $\Pi$ is called a \emph{property on $R$}.
For example, Hamiltonicity, the property of having a cycle that includes each vertex, is an $\NP$-complete property on \textsc{GI}.

\begin{theorem}\label{thm:npceqrel}
  If $\Pi$ is an \NP-complete property on \textsc{GI}, then the equivalence relation $A$ defined by
  \begin{equation*}
    A = \left\{ \pair{G}{H} \, \middle| \, \pair{G}{H} \in \textsc{GI} \text{ or } (G \in L_\Pi \text{ and } H \in L_\Pi) \right\}
  \end{equation*}
  is an \NP-complete equivalence relation.
\end{theorem}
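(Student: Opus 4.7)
The plan has three steps: first, verify that $A$ is an equivalence relation; second, show that $A \in \NP$; third, exhibit a polynomial-time many-one reduction witnessing $\NP$-hardness of $A$. These correspond directly to the two things the theorem asserts (being an equivalence relation, and being $\NP$-complete).

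For the equivalence-relation check, reflexivity and symmetry are essentially by inspection, since both clauses of the disjunction defining $A$ are reflexive and symmetric individually. Transitivity is the interesting case, which I would handle by case analysis on which clause witnesses $\pair{G}{H} \in A$ and which witnesses $\pair{H}{K} \in A$. When both witnesses are isomorphisms, transitivity of $\textsc{GI}$ applies. When both witnesses are the $L_\Pi$ condition, we immediately have $G, K \in L_\Pi$. The mixed cases---for instance, $G \cong H$ together with $H, K \in L_\Pi$---are where the hypothesis that $\Pi$ is a property \emph{on} $\textsc{GI}$ is essential: an isomorphism forces the two graphs to share the same value of $\Pi$, so $L_\Pi$-membership propagates across the isomorphism, yielding $G \in L_\Pi$ and then $\pair{G}{K} \in A$ via the second clause.

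For membership in $\NP$, observe that $\textsc{GI} \in \NP$ by guessing an isomorphism and $L_\Pi \in \NP$ by hypothesis. On input $\pair{G}{H}$ a nondeterministic verifier can either guess and check an isomorphism from $G$ to $H$ or else guess and verify $\NP$-witnesses for $G \in L_\Pi$ and $H \in L_\Pi$ separately; it accepts if at least one branch accepts. Both tasks are polynomial-time verifiable, so $A \in \NP$.

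The hardness step is where I expect the main idea to live. I would fix once and for all a graph $G_0 \in L_\Pi$ (such a graph exists because $L_\Pi$ is $\NP$-complete, hence nonempty) and define the reduction $f(w) = \pair{w}{G_0}$, mapping graphs to pairs of graphs; this is clearly computable in polynomial time, since $G_0$ is a constant. If $w \in L_\Pi$ then both components are in $L_\Pi$, so $\pair{w}{G_0} \in A$ by the second clause. Conversely, if $\pair{w}{G_0} \in A$, then either $w \cong G_0$---in which case $w \in L_\Pi$ since $\Pi$ is invariant on isomorphism classes and $G_0 \in L_\Pi$---or both $w$ and $G_0$ lie in $L_\Pi$, which directly gives $w \in L_\Pi$. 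So $f$ many-one reduces $L_\Pi$ to $A$, and since $L_\Pi$ is $\NP$-complete by hypothesis, $A$ is $\NP$-hard. The only delicate point in the whole argument is the mixed-case bookkeeping for transitivity and the converse direction of the reduction; both routinely reduce to invoking the $\textsc{GI}$-invariance of $\Pi$.
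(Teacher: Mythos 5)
Your proposal is correct and follows essentially the same route as the paper's proof: the same reduction $G \mapsto \pair{G}{G_0}$ from $L_\Pi$ with a fixed graph $G_0 \in L_\Pi$, the same use of $\textsc{GI}$-invariance of $\Pi$ in the converse direction, and the same observation that $A \in \NP$ because both $\textsc{GI}$ and $L_\Pi$ are. The only difference is that you spell out the transitivity case analysis, which the paper dismisses as straightforward.
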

\begin{proof}
  It is straightforward to prove that $A$ is an equivalence relation, so it remains to show that it is $\NP$-complete.
  The language $A$ is in $\NP$ because both $R$ and $L_\Pi$ are in $\NP$ by hypothesis.
  Thus we need only show that $A$ is $\NP$-hard.

  Let $H$ be a graph satisfying $\Pi$; such a graph must exist because $\Pi$ is $\NP$-complete and therefore there must be at least one graph that satisfies $\Pi$ and at least one that does not (otherwise no many-one reduction to $L_\Pi$ could exist).
  The reduction proving that $A$ is $\NP$-complete is from $L_\Pi$, and the mapping is given by $G \mapsto \pair{G}{H}$.
  This function is computable in linear time; the size of $H$ is constant with respect to the size of $G$.

  Now we show that $G \in L_\Pi$ if and only if $\pair{G}{H} \in A$, for any graph $G$.
  If $G \in L_\Pi$, then $G \in L_\Pi$ and $H \in L_\Pi$, so $\pair{G}{H} \in A$.
  If $\pair{G}{H} \in A$, then either $G \in L_\Pi$ and $H \in L_\Pi$, in which case $G \in L_\Pi$, or $G$ is isomorphic to $H$, in which case $G$ is in $L_\Pi$ because $H$ is.
  In either case $G \in L_\Pi$.
  We conclude that $L_\Pi \mor A$, and so $A$ is an $\NP$-complete equivalence relation.
\end{proof}

\begin{example}\label{ex:npceqrel}
  The language
  \begin{equation*}
    \left\{ \pair{G}{H} \, \middle| \, \pair{G}{H} \in \textsc{GI} \text{ or } G \text{ and } H \text{ have a Hamiltonian cycle}  \right\}
  \end{equation*}
  is an \NP-complete equivalence relation.
\end{example}

There are other ways of constructing a natural $\NP$-complete equivalence relation.
For example, there is a finitely presented group whose word problem is $\NP$-complete \autocite[Corollary~1.1]{sbr02}, and the word problem is already an equivalence relation.
This may be considered ``more natural'' because it does not involve the disjunction of two distinct computational problems, though it lacks the simplicity of our approach.

\begin{example}
  As stated briefly above, \autoref{thm:npceqrel} can be generalized to isomorphism of structures other than graphs and/or larger complexity classes.
  For example, replacing \textsc{GI} with \textsc{FI}, the Boolean formula isomorphism problem, and an $\NP$-complete property on \textsc{GI} with a $\SigmaTwoP$-complete property on \textsc{FI} yields a $\SigmaTwoP$-complete equivalence relation.
\end{example}

\begin{corollary}
  If $R$ is $\NPEq$-complete, then $R$ is $\NP$-complete.
\end{corollary}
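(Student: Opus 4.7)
The plan is to chain together two facts already established in the paper: the existence of a concrete $\NP$-complete equivalence relation (Theorem~\ref{thm:npceqrel} or Example~\ref{ex:npceqrel}), and the observation from the preliminaries that every kernel reduction automatically induces a many-one reduction via $\pair{x}{y} \mapsto \pair{f(x)}{f(y)}$. Membership of $R$ in $\NP$ is immediate, since $\NPEq \subseteq \NP$ by definition. So the entire content of the corollary is $\NP$-hardness of $R$ as a language.

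For the hardness direction, let $A$ be the equivalence relation from Example~\ref{ex:npceqrel} (or any equivalence relation produced by Theorem~\ref{thm:npceqrel}). This $A$ is an equivalence relation and is $\NP$-complete as a language, hence in particular $A \in \NPEq$. Since $R$ is $\NPEq$-complete under polynomial-time kernel reductions by hypothesis, there is a polynomial-time computable $f$ witnessing $A \kr R$. By the basic observation recorded in \autoref{sec:preliminaries}, the function $\pair{x}{y} \mapsto \pair{f(x)}{f(y)}$ is a polynomial-time many-one reduction from $A$ (viewed as a language) to $R$ (viewed as a language), so $A \mor R$.

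Now I close the argument by transitivity of $\mor$. For any $L \in \NP$, the $\NP$-completeness of $A$ gives $L \mor A$, and composing with $A \mor R$ yields $L \mor R$. Therefore $R$ is $\NP$-hard, and combined with $R \in \NP$, this shows $R$ is $\NP$-complete.

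There is essentially no obstacle here; the corollary is really a bookkeeping consequence of the three ingredients (existence of an $\NP$-complete equivalence relation, the kernel-to-many-one conversion, and transitivity of $\mor$). The only subtle point worth flagging in the write-up is that one must invoke a concrete $\NPEq$ member that is already $\NP$-hard, which is precisely what Theorem~\ref{thm:npceqrel} supplies; without such a witness the argument would be vacuous, since a priori the class $\NPEq$ might fail to contain any $\NP$-hard language.
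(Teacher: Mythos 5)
Your proof is correct and is exactly the argument the paper gives (in condensed form): take the $\NP$-complete equivalence relation of \autoref{ex:npceqrel}, note it lies in $\NPEq$ so it kernel-reduces to $R$, convert the kernel reduction to a many-one reduction, and conclude by transitivity. Your write-up simply makes explicit the steps the paper leaves as ``follows immediately.''
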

\begin{proof}
  This follows immediately from the existence of an $\NP$-complete equivalence relation, as in \autoref{ex:npceqrel}, and the fact that a kernel reduction implies a many-one reduction.
\end{proof}

This corollary provides a clearer proof of \autocite[Proposition~8.1]{bcffm}.

\begin{corollary}[{\autocite[Proposition~8.1]{bcffm}}]
  If \textsc{GI} is \NPEq-complete then the polynomial hierarchy collapses to the second level, that is, $\PH = \STP$.
\end{corollary}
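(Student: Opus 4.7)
The plan is to chain together the preceding corollary with a well-known classical result about the complexity of graph isomorphism. First, I would invoke the previous corollary: if \textsc{GI} is $\NPEq$-complete under polynomial-time kernel reductions, then viewing \textsc{GI} as the language it induces, it is $\NP$-complete under polynomial-time many-one reductions. This step is essentially immediate, since the previous corollary is already stated in full generality for any $\NPEq$-complete relation $R$, and \textsc{GI} is itself in $\NP$.

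Next, I would cite the classical theorem of Boppana, H\aa stad, and Zachos (building on the Goldwasser--Sipser result that \textsc{GI} lies in $\mathsf{AM} \subseteq \Sigma_2^P \cap \Pi_2^P$, together with the fact that if any $\coAM$-hard language is in $\mathsf{AM}$ then $\PH$ collapses): if \textsc{GI} is $\NP$-complete under polynomial-time many-one reductions, then $\PH = \STP$. Since we have just established the hypothesis of that theorem, the conclusion follows directly.

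The combined proof is therefore only a few lines: apply the prior corollary to reduce the $\NPEq$-completeness hypothesis to ordinary $\NP$-completeness of \textsc{GI}, then apply the classical collapse result to deduce $\PH = \STP$. There is no real obstacle here; the substantive content is entirely absorbed into the two cited results, and the present corollary is valuable mainly because it exhibits how the kernel-to-many-one transfer (the previous corollary) gives a cleaner path to this conclusion than the more intricate argument given in \autocite[Proposition~8.1]{bcffm}.
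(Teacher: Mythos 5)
Your proof is correct and follows essentially the same two-step structure as the paper's: apply the preceding corollary to convert $\NPEq$-completeness of \textsc{GI} into ordinary $\NP$-completeness, then invoke the classical collapse result. The only difference is cosmetic --- the paper cites Sch\"oning's lowness theorem for \textsc{GI} where you cite the Goldwasser--Sipser/Boppana--H\aa stad--Zachos route via $\mathsf{AM}$, but both are standard references for the same collapse to $\STP$.
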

\begin{proof}
  By the previous corollary, if \textsc{GI} is $\NPEq$-complete, then it is $\NP$-complete, which implies the stated collapse (see \autocite{schoning87}).
\end{proof}

\autoref{thm:npceqrel} also provides a simple method for proving the equivalence of $\P = \NP$ and $\PEq = \NPEq$.

\begin{theorem}\label{thm:pnppeqnpeq}
  $\P = \NP$ if and only if $\PEq = \NPEq$.
\end{theorem}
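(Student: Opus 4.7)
The plan is to prove the two directions separately. The forward direction is essentially a restatement of the definitions: since $\PEq$ and $\NPEq$ were defined as the subclasses of $\P$ and $\NP$ respectively consisting of (languages induced by) equivalence relations, the assumption $\P = \NP$ forces these two subclasses to coincide. I would dispense with this direction in a single line.

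For the reverse direction, the plan is to leverage \autoref{thm:npceqrel} (and concretely \autoref{ex:npceqrel}), which exhibits an equivalence relation $A$ that is $\NP$-complete under polynomial-time many-one reductions. Assuming $\PEq = \NPEq$, the relation $A$, which lies in $\NPEq$ by construction, must then lie in $\PEq$, and in particular in $\P$. Because $A$ is $\NP$-hard under $\mor$ reductions, every language in $\NP$ polynomial-time many-one reduces to $A \in \P$, and therefore belongs to $\P$. This yields $\NP \subseteq \P$, and combined with the trivial inclusion $\P \subseteq \NP$, gives $\P = \NP$.

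The main obstacle in this whole argument has already been overcome in \autoref{thm:npceqrel}: producing an equivalence relation that is $\NP$-complete under ordinary many-one reductions. With that in hand, the only point to be careful about is that the hardness of $A$ is under $\mor$ reductions rather than kernel reductions, but that is precisely the notion needed here, because $\PEq = \NPEq$ puts $A$ into $\P$ and $\mor$-hardness then pulls all of $\NP$ down into $\P$ through the standard closure of $\P$ under polynomial-time many-one reductions. No appeal to kernel reductions is required in the proof itself.
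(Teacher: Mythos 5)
Your proposal is correct and follows exactly the paper's own argument: the forward direction is definitional, and the reverse direction uses the $\NP$-complete equivalence relation $A$ from \autoref{ex:npceqrel}, the hypothesis $\PEq = \NPEq$ to place $A$ in $\P$, and the closure of $\P$ under $\mor$ reductions to conclude $\P = \NP$. No differences worth noting.
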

\begin{proof}
  If $\P = \NP$, then $\PEq = \NPEq$ by their definitions.
  Suppose now that $\PEq = \NPEq$.
  Let $A$ denote the $\NP$-complete equivalence relation defined in \autoref{ex:npceqrel}.
  Since $A \in \NPEq$ and $\PEq = \NPEq$ by hypothesis, $A \in \PEq$, and hence $A \in \P$.
  Since $\P$ is closed under $\mor$ reductions, any $\NP$-complete problem in $\P$ implies $\P = \NP$.
\end{proof}

As stated in \autoref{open:npeqc}, we do not know whether an $\NPEq$-complete problem exists.
In the following theorem we describe an equivalence relation that, if it were $\NPEq$-complete, would prove that injective kernel reductions are strictly weaker than general kernel reductions.
This is interesting because it again demonstrates that the number and size of equivalence classes is important when considering the (im)possibility of polynomial-time kernel reductions between equivalence relations.
In the following theorem, if an equivalence relation is ``complete under $\kri$ reductions in $\NPEq$'' we mean that every equivalence relation in $\NPEq$ reduces to it by a polynomial-time computable kernel reduction which is also injective (that is, ``one-to-one'').

\begin{theorem}\label{thm:inj}
  Let $\Pi$ be a property on \textsc{GI}.
  If the equivalence relation $A$ defined by
  \begin{equation*}
    A = \left\{ \pair{G}{H} \, \middle| \, \pair{G}{H} \in \textsc{GI} \text{ or } (G \in L_\Pi \text{ and } H \in L_\Pi \text{ and } |G| = |H|) \right\}
  \end{equation*}
  is complete for $\NPEq$ under $\kr$ reductions, then $A$ is not complete under $\kri$ reductions.
\end{theorem}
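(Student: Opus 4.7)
The plan is to exhibit an equivalence relation $R \in \NPEq$ for which no polynomial-time injective kernel reduction to $A$ can exist. The obstruction will be combinatorial, in the spirit of \autoref{sec:limitations}: every equivalence class of $A$ turns out to be finite, while $R$ will be chosen to have an infinite equivalence class.

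First I would verify that every equivalence class of $A$ is finite. Under a standard adjacency-matrix encoding, a graph on $m$ vertices corresponds to a string whose length depends only on $m$. The isomorphism class of an $m$-vertex graph contributes at most $m!$ strings to $[G]_A$, and the clause ``$|G| = |H|$'' in the definition of $A$ restricts the extra $L_\Pi$ contribution to strings of length exactly $|G|$; these encode graphs on the same number of vertices, of which there are at most $2^{\binom{m}{2}}$. Thus $[w]_A$ lies inside a finite set of strings for every $w \in \Sigma^*$.

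Next I would pick $R = \Sigma^* \times \Sigma^*$, the trivial relation in which every pair of strings is related; this is in $\PEq \subseteq \NPEq$, and its unique equivalence class is $\Sigma^*$, which is infinite. Suppose, for contradiction, that $f$ were a polynomial-time injective kernel reduction from $R$ to $A$. Since every pair in $\Sigma^* \times \Sigma^*$ lies in $R$, the image $f(\Sigma^*)$ must sit inside a single equivalence class of $A$. By injectivity, $|f(\Sigma^*)| = |\Sigma^*| = \aleph_0$, which contradicts the finiteness established in the previous step. Hence no injective polynomial-time kernel reduction from $R$ to $A$ exists, and therefore $A$ is not $\kri$-complete for $\NPEq$.

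The only delicate point, and essentially the one thing that could go wrong, is the finiteness check in the first step: it relies on fixing an encoding in which each graph corresponds to only finitely many strings, as is standard. Once that convention is in place, the rest is a short cardinality argument. Incidentally, the hypothesis that $A$ is $\kr$-complete for $\NPEq$ is not actually needed inside the argument; it only guarantees that $A \in \NPEq$, so that asking about its $\kri$-completeness is meaningful in the first place.
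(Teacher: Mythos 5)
Your proof is correct and follows essentially the same route as the paper's: both rest on the observation that every equivalence class of $A$ is finite while the chosen source relation has an infinite equivalence class, so any kernel reduction (which must map that class into a single class of $A$) fails to be injective by pigeonhole. The only difference is cosmetic --- the paper uses the ``same number of $1$s'' relation as the witness where you use $\Sigma^* \times \Sigma^*$ --- and your closing remark that the $\kr$-completeness hypothesis is not actually used in the argument applies equally to the paper's version.
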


The only difference between the equivalence relation $A$ defined here and the one defined in \autoref{thm:npceqrel} is the requirement that $|G| = |H|$.
This means that although the number of equivalence classes in $A$ is infinite (at least one for each size), each of those equivalence classes is itself finite.
In contrast, consider the equivalence relation $S$ defined by
\begin{equation*}\label{eq:ones}
  S = \left\{ \pair{x}{y} \, \middle| \, x \text{ and } y \text{ have the same number of } 1 \text{s} \right\}.
\end{equation*}
The equivalence relation $S$ has an infinite number of equivalence classes: $[1]$, $[11]$, $[111]$, etc.
Each equivalence class is itself infinite as well: for each $w \in \Sigma^*$, the equivalence class $[w]$ contains $w$, $0w$, $00w$, etc.

\begin{proof}[Proof of \autoref{thm:inj}]
  Let $S$ be the equivalence relation defined in the preceding paragraph.
  The language $S$ is decidable in linear time by a deterministic Turing machine, hence it is in $\NP$.
  Since $A$ is $\NPEq$-complete by hypothesis, $S \kr A$.
  Thus there is a polynomial-time computable function $f$ such that $\pair{x}{y} \in S$ if and only if $\pair{f(x)}{f(y)} \in A$.

  By the discussion preceding this theorem, $[w]_S$ is infinite and $[f(w)]_A$ is finite.
  By \autoref{lem:image}, $f([w]_S) \subseteq [f(w)]_A$.
  Consider $f|_{[w]_S}$, that is, $f$ restricted to the domain $[w]_S$.
  Then $f|_{[w]_S}$ is a mapping from the infinite set $[w]_S$ to the finite set $[f(w)]_A$.
  By the pigeonhole principle, $f|_{[w]_S}$ is not injective.
  Hence the unrestricted reduction $f$ is not injective, and therefore $A$ is not $\kri$-complete in $\NPEq$.
\end{proof}

\newcommand{\floor}[1]{\lfloor #1 \rfloor}
\section{Existence of intermediary problems}
\label{sec:intermediary}
%
According to the seminal theorem by Ladner \autocite{ladner75}, if $\P \neq \NP$, then there are problems of intermediate complexity, in the sense that these problems are neither in $\P$ nor $\NP$-complete.
The theorem does not immediately imply a similar result for equivalence problems, since $\PEq$ is different from $\P$ and $\NPEq$ is different from $\NP$ (specifically, in each case, the latter contains problems that are not equivalence problems).
Do kernel reductions induce the same rich structure between $\PEq$ and $\NPEq$ as do many-one reductions between $\P$ and $\NP$?
We adapt a proof of Ladner's theorem from \autocite{df03} (which has been attributed to Russell~Impagliazzo) to classes of equivalence problems;
this section details that adaptation.

%
The main theorem of this section is the existence of $\NPEq$-intermediary problems under the assumption that $\PEq \neq \NPEq$ (which is equivalent to the assumption $\P \neq \NP$ by \autoref{thm:pnppeqnpeq}).
We conclude that even though kernel reductions are strictly weaker than many-one reductions, they still preserve the hierarchies of problems of various computational complexities we expect from our understanding of traditional complexity classes.
The graph isomorphism problem, as one of the few candidates for an $\NP$-intermediary problem, may be the best candidate for a natural $\NPEq$-intermediary problem as well.

This proof of Ladner's theorem for equivalence relations is a delayed diagonalization via progressive padding.
First we define the equivalence relation performing the diagonalization and the corresponding padding function, then we show that this problem is neither in $\PEq$ nor $\NPEq$-complete.

\begin{construction}\label{con:diag}
  Let $K$ be an $\NP$-complete equivalence relation.
  We know that such equivalence relations exist by \autoref{thm:npceqrel}.
  Define the equivalence relation $R$ by
  \begin{equation*}
    R = \left\{\left\langle x 0 1^{p(n) - n - 1}, y 0 1^{p(n) - n - 1}\right\rangle \, \middle| \, \pair{x}{y} \in K \text{ and } |x| = |y| = n\right\},
  \end{equation*}
  where $p$ is a padding function that will be defined below.
  The equivalence relation $R$ is a padded version of $K$.

  Our goal is to define the function $p$ so that $R$ is not too hard and not too easy: it's output should be large enough that $R$ is not $\NPEq$-complete but not so large that $R$ is in $\P$.
  For this we need an enumeration of each polynomially clocked Turing machine, $\{M_i\}_i$, where machine $M_i$ halts within time $i n^i$ on inputs of length $n$.
  For any pair of strings $x$ and $y$, we say a Turing machine $M$ \emph{disagrees with $R$ on $\pair{x}{y}$} if
  \begin{itemize}
  \item $M(\pair{x}{y})$ accepts and $\pair{x}{y} \notin R$, or
  \item $M(\pair{x}{y})$ rejects and $\pair{x}{y} \in R$.
  \end{itemize}
  We define $p$ for each positive integer $n$ by the following iterative process (and thus we implicitly define $R$ iteratively as well).
  Initially, let $i = 1$, then perform the following steps for each $n$ in order.
  \begin{itemize}
  \item Define $p(n)$ to be $n^i$.
  \item
    Check if there is any pair of strings $x$ and $y$, each of length at most $\log \log n$, such that $M_i$ disagrees with $R$ on $\pair{x}{y}$.
    If any such pair exists \emph{and} $\floor{\log \log n}$ is an integer not already seen, then increment $i$.\qedhere
  \end{itemize}
\end{construction}

The following three lemmas prove that this problem is of intermediate complexity if $\PEq \neq \NPEq$.

\begin{lemma}
  The function $p$ in \autoref{con:diag} is computable in time polynomial in $n$.
\end{lemma}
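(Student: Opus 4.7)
The plan is to exhibit an algorithm that computes $p(n)$ by simulating the iterative definition in \autoref{con:diag} up to stage $n$, maintaining a table of the values $p(1), p(2), \dotsc, p(n)$ together with the current value of $i$, and then to argue that each stage runs in time polynomial in $n$. The key insight that makes the bookkeeping tractable is that every quantity examined at stage $n$ is either already cached, or pertains only to strings of length at most $\log\log n$, so the cost of an expensive primitive (simulating $M_i$, deciding $K$, or deciding $R$) is only polylogarithmic in $n$.

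First I would bound the growth of $i$. Because $i$ is only incremented when $\lfloor\log\log n\rfloor$ is a value not seen before, by stage $n$ we have $i \leq \lfloor\log\log n\rfloor + 1$. This is the main reason why all the "small" computations stay small. Next I would describe the per-stage work. At stage $n'$, the algorithm enumerates all pairs $\pair{x}{y}$ with $|x|,|y|\leq\log\log n'$; there are $O(\log^{2} n')$ such pairs. For each one it (a) simulates $M_i$ on $\pair{x}{y}$ for $i\cdot|\pair{x}{y}|^i$ steps, which takes at most $(\log\log n')^{O(\log\log n')} = (\log n')^{O(\log\log\log n')} = n'^{o(1)}$ time, and (b) decides whether $\pair{x}{y}\in R$.

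The membership test for $R$ is where the cached table is used. To check $\pair{u}{v}\in R$, one scans $m=0,1,\dotsc,|u|$, testing whether $u$ and $v$ have the form $x01^{p(m)-m-1}$ and $y01^{p(m)-m-1}$ with $|x|=|y|=m$; each $p(m)$ required here has $m \leq \log\log n'$ and has already been stored, so the scan costs polylogarithmic time. Then one checks $\pair{x}{y}\in K$, but $|x|,|y|\leq\log\log n'$, so a brute-force guess-and-verify of a polynomial-sized $\NP$-witness for $K$ runs in time $2^{\mathrm{poly}(\log\log n')}$, which is again $n'^{o(1)}$. Finally the algorithm tests whether any disagreement was found and whether $\lfloor\log\log n'\rfloor$ is a fresh integer (comparing against a stored ``largest value seen'' counter), and increments $i$ if warranted; then it writes $p(n')=(n')^i$ into the table, which takes polynomial time in $n'$ since $i\leq\log\log n'+1$ and so $(n')^i$ has polynomial bit length.

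Summing over $n'=1,\dotsc,n$ gives total time polynomial in $n$, and reading off the table yields $p(n)$. The main obstacle to verify is the runtime of the inner simulation of $M_i$ on small inputs: one must check that $(\log\log n')^{O(\log\log n')}$ is truly polynomially bounded in $n'$, which follows from the identity $(\log\log n')^{\log\log n'} = 2^{(\log\log n')(\log\log\log n')}$ and the fact that this exponent is $o(\log n')$.
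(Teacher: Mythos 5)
Your proof is correct and follows essentially the same route as the paper's: compute $p(1),\dotsc,p(n)$ iteratively, bound $i$ by roughly $\log\log n$ so that simulating $M_i$ on inputs of length at most $\log\log n$ costs $2^{O(\log\log n\cdot\log\log\log n)}=\poly(n)$, and observe that deciding membership in $R$ (hence in $K$) on such short strings by exponential-time brute force is still polynomial in $n$. Your write-up is somewhat more explicit about the cached table of earlier $p$-values and the bit-length of $(n')^i$, but these are elaborations of the same argument, not a different one.
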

\begin{proof}
  Computing $p(n)$ requires computing $p(1)$, $p(2)$, $\dotsc$, $p(n - 1)$; if each of these $n - 1$ computations takes a polynomial amount of time, the total time required to compute $f(n)$ remains polynomial in $n$, by induction.
  Since the strings $x$ and $y$ are of length at most $\log \log n$, the total number of iterations required to test all pairs of strings is polynomial in $n$.
  The simulation of $M_i$ is computable in time $i (\log \log n)^i$, but $i$ is at most $\log \log n$, since $i$ can only be incremented at most $\log \log n$ times.
  Using the fact that a polynomial in $\log \log n$ is bounded above by $O(\log n)$,
  \begin{align*}
    i (\log \log n)^i & \leq \log \log n (\log \log n)^{\log \log n} \\
    & = 2^{(\log \log \log n)^2 \log \log n } \\
    & \leq 2^{(\log \log n)^2} \\
    & = 2^{O(\log n)} \\
    & = \poly(n),
  \end{align*}
  so the machine $M_i$ runs in time polynomial in $n$.
  The language $R$ is in $\NPEq$ because it is a padded version of the language $K$, which is in $\NPEq$.
  Since $\NP \subseteq \EXP$ and the inputs $x$ and $y$ are each of length $\log \log n$, membership in $R$ can be determined in time polynomial in $n$.
  (Even though the definition of $R$ requires $p$ to be defined, $p$ is already defined for strings of length less than $n$, including the strings $x$ and $y$.)
  Since each step can be performed in polynomial time and there are at most $n$ iterations required when defining $p(n)$, we conclude that $p$ is computable in time polynomial in $n$.
\end{proof}

\begin{lemma}
  Suppose $R$ is the equivalence relation in \autoref{con:diag}.
  If $\PEq \neq \NPEq$, then $R \notin \PEq$.
\end{lemma}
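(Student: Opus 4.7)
The plan is to assume $R \in \PEq$ and derive a contradiction by exhibiting $K \in \P$, which (via \autoref{thm:pnppeqnpeq} and the $\NP$-completeness of $K$ established in \autoref{thm:npceqrel}) would force $\P = \NP$, contradicting the hypothesis $\PEq \neq \NPEq$.

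First, I would establish that the counter $i$ in \autoref{con:diag} eventually stabilizes. Suppose a polynomial-time deterministic Turing machine decides $R$; because $\{M_i\}_i$ enumerates all polynomially clocked machines, this decider is $M_j$ for some positive integer $j$. Since $M_j$ is correct on every input, $M_j$ never disagrees with $R$ on any pair of strings. The increment rule in \autoref{con:diag} is a conjunction whose first clause demands the existence of a disagreeing pair, so once $i$ reaches $j$ the rule cannot fire again, no matter how many new $\lfloor \log \log n \rfloor$ levels appear. Hence there is a constant $c \leq j$ and a threshold $n_0$ such that $p(n) = n^c$ for every $n \geq n_0$.

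Next, I would leverage this stabilized padding to decide $K$ in polynomial time. Given an input $\pair{x}{y}$ with $|x| = |y| = n \geq n_0$, I would construct the padded pair $\langle x 0 1^{n^c - n - 1}, y 0 1^{n^c - n - 1}\rangle$; this takes time polynomial in $n$ because $c$ is a fixed constant. By the definition of $R$, the padded pair lies in $R$ if and only if $\pair{x}{y} \in K$, so feeding it to the assumed polynomial-time decider for $R$ answers membership in $K$ in time polynomial in $n^c$, hence polynomial in $n$. The finitely many inputs with $n < n_0$ are handled by a hard-coded lookup table, and instances with $|x| \neq |y|$ can either be rejected or handled by the underlying decider for $K$ as needed. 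This places $K$ in $\P$, completing the contradiction.

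The main obstacle is the clean stabilization argument for $i$: the $\lfloor \log \log n \rfloor$ guard in the increment rule is there to throttle the growth of $i$ (needed for the previous lemma bounding the runtime of computing $p$), and one must verify that it does not interfere with the present argument. It does not, precisely because it is joined by conjunction with the disagreement clause, so the permanent absence of disagreements once $i = j$ is sufficient to freeze $i$ forever. Once this is in hand, the rest is a textbook padding reduction.
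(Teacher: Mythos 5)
Your proposal is correct and follows essentially the same route as the paper: assume a polynomial-time machine $M_j$ decides $R$, observe that the counter $i$ in \autoref{con:diag} then stabilizes so that $p(n)=n^c$ for a fixed $c$ and all large $n$, and use the resulting polynomial-time padding map to decide $K$ (the paper phrases this as a kernel reduction $K \kr R$ plus closure of $\PEq$ under such reductions, which is the same computation). Your treatment is in fact slightly more careful than the paper's on two points: you note that $i$ may freeze at some $c \le j$ rather than exactly at $j$, and you route the final contradiction through $K$ being $\NP$-complete and \autoref{thm:pnppeqnpeq} rather than asserting $\NPEq$-completeness of $K$ directly.
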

\begin{proof}
  Assume with the intention of producing a contradiction that $R \in \PEq$.
  Thus there is a natural number $i$ such that $M_i$ decides $R$.
  For sufficiently large $n$, the machine $M_i$ never disagrees with $R$, so $p(n) = n^i$ for all sufficiently large $n$.
  Assuming without loss of generality that the string $x$ and $y$ are each of length $n$, this yields a polynomial-time kernel reduction from $K$ to $R$ via the function $\pair{x}{y} \mapsto \left\langle x 0 1^{p(n) - n - 1}, y 0 1^{p(n) - n - 1}\right\rangle$.
  This mapping is polynomial-time computable because $p(n) = n^i$ for all sufficiently large $n$, and $i$ does not depend on $n$.
  Since $\PEq$ is closed under polynomial-time kernel reductions, $K$ is in $\PEq$, and hence $\PEq = \NPEq$, since $K$ is $\NPEq$-complete.
  This is a contradiction with the assumption that $\PEq \neq \NPEq$, hence $R \notin \PEq$.
\end{proof}

\begin{lemma}
  Suppose $R$ is the equivalence relation in \autoref{con:diag}.
  If $\PEq \neq \NPEq$, then $R$ is not $\NPEq$-complete.
\end{lemma}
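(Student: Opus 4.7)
The plan is a delayed diagonalization argument. Assume for contradiction that $R$ is $\NPEq$-complete; then there is a polynomial-time kernel reduction $f \colon K \kr R$ with $|f(x)| \leq c |x|^c + c$ for some constant $c$. I will exhibit a polynomial-time algorithm deciding $R$, contradicting the previous lemma (which established $R \notin \PEq$ under the hypothesis $\PEq \neq \NPEq$).

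First I would establish that the padding function $p$ grows faster than every polynomial. From the previous lemma, $R \notin \PEq$, so no polynomially clocked $M_i$ decides $R$. Consequently, for every fixed index $i$ there must exist arbitrarily large $n$ with $\lfloor \log \log n \rfloor$ a ``fresh'' integer such that $M_i$ disagrees with $R$ on some pair of strings of length at most $\log \log n$; this forces the index used by \autoref{con:diag} to tend to infinity with $n$. Thus $p(n) = n^{i(n)}$ with $i(n) \to \infty$, and in particular $p^{-1}(O(n^c)) = o(n)$ for every constant $c$.

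Next I would describe the decision procedure for $R$. On input $\pair{u}{v}$: accept if $u = v$; reject if $|u| \neq |v|$ or if $|u|$ is not of the form $p(n)$ for any $n$. Otherwise locate the unique $n$ with $p(n) = |u|$, verify that both $u$ and $v$ have the padded form $w 0 1^{p(n) - n - 1}$ with $|w| = n$ (rejecting if not), extract the underlying pair $\pair{x}{y}$ of length $n$, and recursively decide whether $\pair{f(x)}{f(y)} \in R$. Correctness follows immediately from the chain $\pair{u}{v} \in R \iff \pair{x}{y} \in K \iff \pair{f(x)}{f(y)} \in R$ guaranteed by the definition of $R$ and the reduction property of $f$.

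Finally I would bound the running time. Since $|f(x)| \leq c n^c$ and $p$ is superpolynomial, each recursive call operates on strings whose underlying length is at most $p^{-1}(c n^c) = o(n)$; the underlying length therefore shrinks strictly and rapidly between consecutive recursive calls. Combined with the $\log \log n$ schedule in \autoref{con:diag}, the recursion terminates within $O(\log^{*} |u|)$ levels, each level performing only polynomial work, so the overall algorithm runs in polynomial time. This places $R \in \PEq$, the desired contradiction. The main obstacle is the running-time accounting: one must check that the particular growth rate of $p$ forced by the construction is simultaneously slow enough to keep $p$ polynomial-time computable (as shown in the first lemma) and fast enough that $p^{-1}(c n^c)$ produces the strict shrinkage needed to keep the recursion depth manageable.
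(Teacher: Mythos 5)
Your proposal is correct and uses the same core mechanism as the paper: view the assumed kernel reduction $f$ from an $\NP$-complete equivalence relation to the padded relation $R$ as a length-shrinking self-reduction, recurse until the inputs are small, and conclude polynomial-time decidability, contradicting the standing hypotheses. The two arguments certify the shrinkage differently. The paper argues directly from the structure of $f$: either the image of $f$ is finite (so $R$ has finitely many classes and lies in $\PEq$ trivially), or $|f(x)| = p(|x'|) = |x'|^k$ with $k$ strictly larger than the exponent $j$ clocking $f$, whence $|x'| \le |x|^{j/k} < |x|$; its final contradiction is with $\P \neq \NP$ via a polynomial-time algorithm for $K$. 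You instead invoke the preceding lemma ($R \notin \PEq$) to argue that the diagonalization schedule increments $i$ infinitely often, so $p$ is superpolynomial and $p^{-1}(O(n^c)) = o(n)$, and you recurse on $R$ itself, contradicting $R \notin \PEq$. Both routes are sound; yours buys a uniform treatment of the finite-image case (it just bottoms out immediately) at the cost of making this lemma lean on the previous one for the growth rate of $p$, whereas the paper's exponent comparison is self-contained. Two details to tighten: you need an explicit base case (a hardcoded lookup table for underlying lengths below the threshold at which $i(m)$ exceeds $2c$), since strict shrinkage is only guaranteed for sufficiently large inputs and without it the recursion need not terminate; and the claimed $O(\log^* |u|)$ recursion depth is neither justified nor needed --- strict decrease of the underlying length already bounds the depth by $|u|$, which, together with polynomial work per level, suffices.
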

\begin{proof}
  Assume with the intention of producing a contradiction that $R$ is $\NPEq$-complete.
  Thus $K \kr R$, so there is a function $f$ such that $f$ halts within $n^j$ steps and for each string $x$ and $y$, we have $\pair{x}{y} \in K$ if and only if $\pair{f(x)}{f(y)} \in R$.
  If the image of $f$ were finite, then $R$ would have a constant number of equivalence classes.
  In this case, $R$ would be in $\PEq$, and since $\PEq$ is closed under polynomial-time kernel reductions, $K$ would be in $\PEq$ as well, a contradiction with the hypothesis that $\PEq \neq \NPEq$.

  Suppose the image of $f$ is infinite.
  We know $f(w)$ must be of the form $w01^{p(|w|) - n - 1}$ for each string $w$ of length $n$, so the length of $f(w)$ is $p(|w|)$.
  There is a natural number $n_0$ such that for each $n \geq n_0$, there is a positive integer $k$ such that $k$ is greater than $j$ and for each string $w$ of length $n$, we have $|f(w)| = p(n) = n^k$.
  (The integer $k$ is strictly greater than $j$, since if it were less than or equal to $j$, the image of $f$ would be finite.)
  Now we can construct a polynomial-time algorithm for $K$.
  Assume without loss of generality that all inputs are pairs of strings of equal length.
  On inputs of the form $\pair{x}{y}$, proceed as follows.
  \begin{itemize}
  \item If $|x| < n_0$ (or equivalently $|y| < n_0$), decide whether $\pair{x}{y} \in K$ by examining a hardcoded lookup table for strings of length less than $n_0$.
  \item Compute $f(x)$ and $f(y)$.
  \item If either $|f(x)|$ or $|f(y)|$ is not in the range of $p$, reject.
  \item
    Suppose $f(x) = x' 0 1^{p(m) - m - 1}$ and $f(y) = y' 0 1^{p(m) - m - 1}$, where $|x'| = |y'| = m$.
    Invoke this algorithm recursively on input $\pair{x'}{y'}$.
  \end{itemize}

  Assuming for now that $x'$ and $y'$ are shorter than $x$ and $y$.
  Then the correctness of this algorithm follows from the fact that
  \begin{equation*}
    \pair{x}{y} \in K \iff \pair{f(x)}{f(y)} \in R \iff \pair{x'}{y'} \in K.
  \end{equation*}
  Since the length of the inputs to the algorithm decrease on each recursive invocation, there are at most $n$ recursive calls on inputs of pairs of strings of length $n$.
  Eventually the solution can be found in the hardcoded lookup table (the base case of the recursion).
  Each recursive invocation of the algorithm other than the base case requires computing $f$ on an input of length $n$ (twice), which can be done in polynomial time.
  Thus the overall time required for this algorithm is polynomial in $n$.
  This proves that $K \in \P$ and thus $\P = \NP$.
  Since $\P = \NP$ if and only if $\PEq = \NPEq$, we have a contradiction.

  Finally, we prove that $|x'| < |x|$ (the proof that $|y'| < |y|$ is the same), which we postponed from the previous paragraph.
  Due to its time bound, $|f(x)| \leq |x|^j$ for any string $x$.
  By assumption, $p(|x'|) = |x'|^k$.
  By construction, $p(|x'|) = |f(x)|$
  Combining these three relations yields the inequality
  \begin{equation*}
    |x'|^k = p(|x'|) = |f(x)| \leq |x|^j,
  \end{equation*}
  so $|x'| \leq |x|^{j / k} < |x|$, since $k > j$ and lengths must be natural numbers.
\end{proof}

Combining the preceding three lemmas yields Ladner's theorem for classes of equivalence relations.

\begin{theorem}\label{thm:intermediary}
  If $\PEq \neq \NPEq$, then there is an equivalence relation in $\NPEq$ that is neither in $\PEq$ nor $\NPEq$-complete.
\end{theorem}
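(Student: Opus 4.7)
The plan is to take the equivalence relation $R$ defined in Construction~\ref{con:diag} as the witness and simply assemble the three preceding lemmas. Three things must be verified: $R \in \NPEq$, $R \notin \PEq$ under the hypothesis $\PEq \neq \NPEq$, and $R$ is not $\NPEq$-complete under the same hypothesis.

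For $R \in \NPEq$, I would use the first lemma, which establishes that the padding function $p$ is computable in polynomial time. Because $R$ is obtained from the $\NPEq$-complete relation $K$ by appending a polynomially bounded suffix $01^{p(n) - n - 1}$ to strings of length $n$, a nondeterministic polynomial-time decision procedure for $R$ can parse the suffix, extract the underlying pair $\pair{x}{y}$, verify that the suffix length matches $p(|x|) = p(|y|)$, compute $p$ explicitly in polynomial time, and then invoke the nondeterministic procedure for $K$ on $\pair{x}{y}$. This uses only the polynomial-time computability of $p$ established in the first lemma, together with the fact that $K \in \NPEq$ (which is guaranteed by \autoref{thm:npceqrel}, ensuring $K$ exists).

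The remaining two conditions are delivered directly by the second and third lemmas. Conjoining these with the hypothesis $\PEq \neq \NPEq$ yields exactly the intermediary status asserted: $R \in \NPEq$, $R \notin \PEq$, and $R$ is not $\NPEq$-complete. This is all that the theorem claims.

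The main conceptual obstacle has already been absorbed into Construction~\ref{con:diag} and its accompanying lemmas, namely the delayed diagonalization that tunes the growth of $p$ so that $R$ is neither easy enough to belong to $\PEq$ (the padding grows fast enough to prevent any fixed $M_i$ from deciding $R$ on small inputs) nor hard enough to be $\NPEq$-complete (the padding grows slowly enough that any polynomial-time kernel reduction $K \kr R$ would yield a recursive polynomial-time algorithm for $K$). Given these prepared lemmas, the theorem follows immediately by combination, with no further argument required.
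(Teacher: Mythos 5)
Your proposal is correct and follows the same route as the paper, which likewise obtains the theorem by combining the three lemmas accompanying Construction~\ref{con:diag} (with the membership $R \in \NPEq$ observed, as you do, from the fact that $R$ is a polynomially padded version of the $\NP$-complete equivalence relation $K$ with $p$ computable in polynomial time). No gaps.
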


This technique can be generalized to other classes of equivalence relations, as long as the underlying machines for the smaller class can be enumerated and the larger class has an equivalence relation that is complete under many-one reductions.
For example, we can produce equivalence relations between the polynomial hierarchy and $\PSPACE$.

\begin{corollary}\label{cor:pspace}
  If $\PHEq \neq \PSPACEEq$, then there is an equivalence relation in $\PSPACEEq$ that is neither in $\PHEq$ nor $\PSPACEEq$-complete.
\end{corollary}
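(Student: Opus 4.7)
The plan is to mimic the proof of \autoref{thm:intermediary} with three substitutions: replace the $\NP$-complete equivalence relation $K$ by a $\PSPACEEq$-complete one (whose existence is guaranteed by \autoref{cor:hardproblems}, part~1), replace the enumeration of polynomially clocked deterministic Turing machines by an enumeration $\{M_i\}_{i \geq 1}$ of polynomially clocked alternating Turing machines where $M_i$ has at most $i$ alternations and runs in time $i n^i$ (every $\PH$ language is decided by some such machine), and, of course, change the diagonalization target so that $R$ is built to disagree with every $M_i$ somewhere.

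With these replacements, reprove the three lemmas accompanying \autoref{con:diag}. First I would verify that the revised padding function $p$ is still computable in time polynomial in $n$: the inputs of interest have length at most $\log \log n$, and the two non-trivial simulations on such inputs — of $M_i$ and of membership in $R$ (which ultimately reduces to membership in a fixed $\PSPACE$-complete equivalence relation on strings of length $\log \log n$) — each cost $2^{\poly(\log \log n)} = 2^{O(\log n)} = \poly(n)$, which matches the bound used in the proof for $M_i$ in the original construction. Second I would show that $R \notin \PHEq$: if $R$ were decided by some $M_i$, then $i$ would stabilize, the value of $p(n)$ would become $n^i$ from some point on, and the map $\pair{x}{y} \mapsto \left\langle x 0 1^{p(n) - n - 1}, y 0 1^{p(n) - n - 1}\right\rangle$ would then be a polynomial-time kernel reduction from $K$ to $R$; since $\PHEq$ is closed under polynomial-time kernel reductions, this would place $K$ — and hence all of $\PSPACEEq$, by the $\PSPACEEq$-completeness of $K$ — in $\PHEq$, contradicting the hypothesis.

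Third I would show that $R$ is not $\PSPACEEq$-complete by essentially reusing the recursive shrinking argument from the original Lemma. Suppose $K \kr R$ via a polynomial-time function $f$ running in time $n^j$. If the image of $f$ is finite, $R$ has only finitely many equivalence classes and so $K \in \PEq \subseteq \PHEq$, contradiction. Otherwise, $|f(w)| = p(|w|) = |w|^k$ for some $k > j$ eventually, and recursively unpacking the padding yields a polynomial-time algorithm for $K$, again contradicting the hypothesis via $\PEq \subseteq \PHEq$.

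The main obstacle, and the only genuine divergence from the proof of \autoref{thm:intermediary}, is step one: guaranteeing that the brute-force simulation of an alternating machine $M_i$ on length-$\log\log n$ inputs fits within a polynomial-in-$n$ time budget. Since $i$ is incremented at most $\log \log n$ times, we have $i \leq \log \log n$, so $M_i$ runs in time $i(\log\log n)^i$ and enumerating its $i$ alternation layers contributes an additional $2^{O(i(\log\log n)^i)}$ factor; the same calculation used in the original Lemma shows this is still bounded by $2^{O(\log n)} = \poly(n)$. Once this simulation bound is confirmed, the remaining arguments port over verbatim, provided one also verifies that $\PHEq$ is closed under polynomial-time kernel reductions (it is, since each $\Sigma_k^P$ is closed under polynomial-time many-one reductions and every kernel reduction induces a many-one reduction).
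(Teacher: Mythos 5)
Your overall architecture is exactly the generalization the paper has in mind (the paper gives no explicit proof of \autoref{cor:pspace}, only the remark that the technique of \autoref{sec:intermediary} applies whenever the machines of the smaller class can be enumerated and the larger class has a complete equivalence relation), and your second and third steps port over correctly: the contradictions correctly land on $\PHEq = \PSPACEEq$ via closure of $\PHEq$ under $\kr$ reductions and the completeness of $K$.

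There is, however, a genuine quantitative gap in your step one, precisely at the point you identify as the only real divergence. For the deterministic machines of the original construction, the cost of the disagreement check is the running time itself, $i(\log\log n)^i$, and the paper's calculation shows this \emph{value} is at most $2^{(\log\log n)^2} = n^{o(1)}$. For an alternating machine with $i$ alternations, the deterministic simulation costs roughly $2^{O(i \cdot t)}$ with $t = i(\log\log n)^i$, i.e.\ you must put the running time \emph{in the exponent}. For that to be $\poly(n)$ you would need $i(\log\log n)^i = O(\log n)$, but with $i$ as large as $\log\log n$ this quantity reaches roughly $(\log\log n)^{\log\log n} = (\log n)^{\log\log\log n}$, which is $\omega(\log n)$; hence $2^{O(i(\log\log n)^i)}$ is superpolynomial in $n$. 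So the calculation from the original lemma does \emph{not} carry over, and as written $p$ is not computable in polynomial time. The fix is standard and easy: the diagonalization only needs the length bound on the candidate disagreement witnesses to be some unbounded function $g(n)$, not $\log\log n$ specifically. Choosing $g$ small enough that $2^{O(g(n)\cdot g(n)^{g(n)})} = \poly(n)$ (for instance $g(n) = \log\log\log n$), or simply clocking the entire computation of $p(n)$ at $n$ steps and incrementing $i$ only when the clocked search finds a disagreement, restores polynomial-time computability of $p$ while preserving the property that each $M_i$ is eventually diagonalized against. With that repair, the rest of your argument goes through.
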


\section{Conclusion}
%
Throughout this work we have proven that kernel reductions are similar to many-one reductions in the most basic ways, but differ in some key aspects.
Like many-one reductions, kernel reductions are transitive and have good closure properties.
The class of equivalence problems in $\PSPACE$ has a complete problem under kernel reductions (\autoref{cor:hardproblems}).
The equivalence of the two equalities $\P = \NP$ and $\PEq = \NPEq$ (\autoref{thm:pnppeqnpeq}) uses the similarity between many-one and kernel reductions.
Just as many-one reductions allow the existence of $\NP$-intermediary problems, kernel reductions allow for the possibility of $\NPEq$-intermediary problems (\autoref{thm:intermediary}).
On the other hand, there are equivalence relations between which there is a many-one reduction but no kernel reduction (\autoref{thm:different}).
Specifically, if there are more equivalence classes, up to strings of certain lengths, in $R$ than in $S$, then no kernel reduction can exist.
Finally, under some assumptions, there is an equivalence problem that is not complete for $\NPEq$ under injective kernel reductions (\autoref{thm:inj}), whereas nearly every known $\NP$-complete problem is isomorphic (the Berman--Hartmanis conjecture \autocite{bh77} states that \emph{every} $\NP$-complete problem is isomorphic).

The techniques used in this paper to show that kernel reductions are weaker than many-one reductions are combinatorial techniques (for example, comparing the numbers of equivalence classes).
Combining these with other complexity theoretic and algebraic techniques has already proven useful: there is no polynomial-time kernel reduction from the graph isomorphism problem to the isomorphism problem for strongly regular graphs \autocite[Theorem~22]{babai14}.
This is interesting because even though the latter appears to be a difficult problem, no polynomial-time many-one reduction from the former to the latter is expected to exist \autocite{babai14}, and the lack of a kernel reduction is evidence in that direction.

Besides the open problems listed in \autoref{sec:generalcompleteness}, we consider the following questions to be worth exploring.
\begin{itemize}
\item There are several problems of \emph{inequivalence} in \autocite{gj79} listed as $\NP$-complete or as $\PSPACE$-complete.
  What do these problems have to do with $\NPEq$-completeness, $\coNPEq$-completeness, and $\PSPACEEq$-completeness?
\item When can results like \autocite[Theorem~1]{rs11}, for example, which shows that an equivalence relation is complete for $\L$ under many-one reductions via a reduction from a problem that is \emph{not} an equivalence relation, be translated to a proof that the problem is complete under kernel reductions for the corresponding class of equivalence problems?
\item
  In the case of the graph isomorphism problem, \textsc{GI}, the number $\#\textsc{GI}(n)$, in the notation of \autoref{sec:limitations}, is the number of (pairwise) non-isomorphic graphs on at most $n$ vertices.
  This differs from the conventional notation $\#\textsc{GI}$ denoting the problem of counting the number of graphs isomorphic to a given graph.
  In other words, our notation counts the \emph{number} of equivalence classes, whereas the latter counts the \emph{size} of an equivalence class.
  For the graph isomorphism problem, computing the size of an equivalence class is Turing-equivalent to deciding whether two graphs are isomorphic \autocite[Theorem~1.24]{kst93}.
  When is the problem of computing the size of an equivalence class Turing-equivalent to the problem of deciding equivalence?
\end{itemize}

\section{Acknowledgments}

The authors acknowledge the invaluable help provided by Josh~Grochow and Steve~Homer.
We thank the anonymous reviewers of an earlier version of this paper for numerous corrections and stylistic suggestions.
Specifically, we thank an anonymous reviewer for showing how to prove $\Cl_1 \subseteq \NKer_1$, and another anonymous reviewer for simplifying the proof of \autoref{thm:density}.

\printbibliography

\end{document}